\documentclass[lettersize,journal]{IEEEtran}
\IEEEoverridecommandlockouts
\usepackage[utf8]{inputenc}
\usepackage{color,setspace}
\usepackage{cite}
\usepackage{amssymb,amsfonts}
\usepackage{enumerate}
\usepackage{bbm}
\usepackage{graphicx}
\usepackage{epstopdf}
\usepackage{subfigure}
\usepackage{stfloats}
\usepackage[cmex10]{amsmath}
\usepackage{tasks}
\usepackage{enumitem}
\usepackage{bm}
\usepackage{xcolor}
\usepackage{xspace}
\usepackage{colortbl}
\usepackage{amsthm}
\usepackage{amsmath}
\usepackage{algpseudocode}
\usepackage[ruled,lined,commentsnumbered]{algorithm2e}


\def\BibTeX{{\rm B\kern-.05em{\sc i\kern-.025em b}\kern-.08em
T\kern-.1667em\lower.7ex\hbox{E}\kern-.125emX}}

\newtheorem{theorem}{\textit{Theorem}}
\newtheorem{lemma}{{\textit{Lemma}}}
\newtheorem{proposition}{\textit{Proposition}}
\newtheorem{corollary}{\textit{Corollary}}
\newtheorem{definition}{\textit{Definition}}
\newtheorem{remark}{\textit{Remark}}

\title{Optimizing AoI at Query in Multiuser Wireless Uplink Networks: A Whittle Index Approach}
\author{\IEEEauthorblockN{Jingwei Liu and He (Henry) Chen}
\thanks{The work of J. Liu is supported in part by the Innovation and Technology Fund (ITF) under Project ITS/204/20 and the CUHK direct grant for research under Project 4055166. The work of H. Chen is supported in part by RGC
General Research Funds (GRF) under Project 14205020.

J. Liu and H. Chen are with Department of Information Engineering, The Chinese University of Hong Kong, Hong Kong SAR, China 
(Email: \{lj020, he.chen\}@ie.cuhk.edu.hk).
}
\thanks{The authors thank Qian Wang for her useful discussions on the network model.}
}
\date{September 2024}

\begin{document}

\maketitle
\begin{abstract}
In this paper, we explore how to schedule multiple users to optimize information freshness in a pull-based wireless network, where the status updates from users are requested by randomly arriving queries at the destination. We use the age of information at query (QAoI) to characterize the performance of information freshness.
Such a decision-making problem is naturally modeled as a Markov decision process (MDP), which, however, is prohibitively high to be solved optimally by the standard method due to the curse of dimensionality. To address this issue, we employ Whittle index approach, which allows us to decouple the original MDP into multiple sub-MDPs by relaxing the scheduling constraints. However, the binary Markovian query arrival process results in a bi-dimensional state and complex state transitions within each sub-MDP, making it challenging to verify Whittle indexability using conventional methods. After a thorough analysis of the sub-MDP's structure, we show that it is unichain and its optimal policy follows a threshold-type structure.
This facilitates the verification of Whittle indexability of the sub-MDP by employing an easy-to-verify condition.
Subsequently, the steady-state probability distributions of the sub-MDP under different threshold-type policies are analyzed, constituting the analytical expressions of different Whittle indices in terms of the expected average QAoI and scheduling time of the sub-MDP.
Building on these, we devise an efficient algorithm to calculate Whittle indices for the formulated sub-MDPs.
The simulation results validate our analyses and show the proposed Whittle index policy outperforms baseline policies and achieves near-optimal performance.

\begin{IEEEkeywords}
Age of information at query, multiuser scheduling, Markov decision process, and Whittle index policy.
\end{IEEEkeywords}

\end{abstract}
\section{Introduction}

In recent years, the widespread adoption of time-critical wireless communication systems has drawn significant attention to the information freshness \cite{7467436,7799033,6917404}.
As a result, the age of information (AoI) metric, defined as the time elapsed since the generation of the most recently received message at the destination, has been proposed to quantify information freshness, see e.g., \cite{kosta2017age,sun2022age,5984917,6195689,10.1145/3323679.3326520,8514816} and references therein.
Most studies focused on the AoI metric have primarily investigated the minimization of the long-term average AoI across various network settings (e.g., see \cite{8486307,8406846,8845182,8000687,8514816,8935400,8845083,8807257}).
That is, these works assumed a \textit{push-based} wireless communication system, where monitors at the destination continuously request the most recently received information.
In contrast, the information freshness in \textit{pull-based} wireless networks has obtained relatively less consideration.
Specifically, at the destinations in the pull-based communication systems, a query arrival process occurs at each monitor, and the received messages are sampled and used by the monitor only at specific time instants when queries arrive, such as satellite Internet of Things (IoT) networks and ecological monitoring systems \cite{9149009,8935395,5597912,10.1145/1689239.1689247}.
To more effectively quantify information freshness in pull-based communication systems, the Age of Information at Query (QAoI) metric has been introduced and increasingly explored in the literature, for instance, in\cite{9676636,9834897,10376462,10552992,10049450} and references therein.
In particular, QAoI is defined as the instantaneous AoI values when the queries arrive at the destination.

There have been some research efforts dedicated to addressing the transmission scheduling problems to optimize the long-term average QAoI in different network scenarios. The QAoI-oriented transmission scheduling coordinations for single-user networks were researched in \cite{9676636,9834897,10376462}, in which when to transmit the status update of the single user is guided by the query arrivals.
Efforts on solving the QAoI-based scheduling problem in wireless two-user uplink networks have also been made in the open literature.
In such a scenario, monitors with query arrivals are deployed at a base station (BS), which schedules the transmissions of status update packets from the end devices corresponding to the monitors in the uplink.

In \cite{10552992}, the authors investigated the QAoI optimization of a reconfigurable intelligent surface-assisted wireless network, including two energy harvesting sensors and a monitor.
The optimization problem was approximated as a convex problem, which was solved by an alternating optimization approach.
However, this study cannot be applied to networks with more than two sensors.
Zakeri et al. in \cite{10049450} considered a time-slotted heterogeneous status update system with transmission and sampling frequency limitations with one source in the ``stochastic arrival'' model, one source in the ``generate-at-will'' model, and a transmitter deploying a buffer-aided and a monitor.
In each time slot, the transmitter decides whether the generate-at-will source is sampled by its transmitter and schedules the transmission of the two sources to minimize the QAoI performance of the network.
After modeling the decision-making problem as a constrained Markov decision process (CMDP), the authors selected to optimize the long-term discounted network-wide QAoI reward, since the CMDP is not unichain, by solving the corresponding linear programming (LP) of the CMDP.
Then, a reduced-complexity heuristic policy also built upon the LP approach was developed by investigating a weakly coupled CMDP problem relaxed from the original CMDP.
Although the proposed heuristic policy was shown applicable to networks including multiple stochastic arrival sources, this method suffers from high computational complexity with a large number of sources.
This is because of the increase in the number of variables and constraints in the LP associated with the weakly coupled CMDP, which grows with the number of sources, leading to an exponential rise in the complexity of solving this LP.
To the best of our knowledge, an efficient scheduling scheme for optimizing the long-term average QAoI in wireless multiuser uplink networks has not been thoroughly explored in the literature.

To fill this gap, in this paper we consider a multiuser scheduling problem for minimizing the expected sum QAoI of a wireless uplink network, where multiple status-updating end devices are scheduled to transmit their updates to the corresponding monitors deployed at a base station (BS). Moreover, each monitor experiences stochastic query arrivals, which follow a binary Markovian process. 
In this context, the most recent status update packets received by the monitor are requested when queries arrive.
We model the considered QAoI-oriented scheduling problem as an MDP problem.
The network-wide instantaneous AoI and query arrival situations are jointly defined as the state of the formulated MDP.
We summarize the main contributions of this paper as follows.
\begin{itemize}
    \item As the number of end devices increases, the state space grows exponentially, making it challenging to compute the optimal scheduling policy of the formulated MDP using standard methods due to the curse of dimensionality \cite{papadimitriou1987complexity}. Hence, we adopt the Whittle index method \cite{Whittle_1988} by regarding the formulated MDP as a restless multi-armed bandit (RMAB). Specifically, we relax the scheduling constraint of the formulated MDP. This allows us to decouple the original problem into multiple constrained problems, which are then formulated as independent sub-MDPs using Lagrange relaxation.
    \item Through an in-depth analysis of the sub-MDP structure, we rigorously prove that the sub-MDP is unichain, and its optimal scheduling policy is threshold-type with two thresholds associated with the query arrival state. Building on this, we establish the Whittle indexability of the sub-MDP by verifying the active condition of indexability, a recently introduced and easy-to-verify criterion \cite{zhoueasier}. 
    To obtain more design insights, we investigate the steady-state probability distributions of the formulated sub-MDP under the threshold-type policies in a variety of contexts.
    On this basis, we derive the analytical expressions for different Whittle indices in terms of the expected average QAoI and scheduling time of the sub-MDP under threshold-type policies.
    
    \item 
We design an efficient algorithm to calculate Whittle indices for the sub-MDP, inspired by the framework in \cite{Akbarzadeh_Mahajan_2022} for computing Whittle indices of MDPs with discounted rewards. 
    The computational complexity of the proposed algorithm is reduced compared to the referenced approach by leveraging the threshold structure and the derived analytical expressions of Whittle indices of the sub-MDP.
    Moreover, closed-form expressions for the Whittle index of the sub-MDP under error-free channels are also obtained based on the simplified sub-MDP structure. Simulation results validate our theoretical analysis and demonstrate the superior, asymptotically optimal performance of the proposed Whittle index policy compared to baseline schemes.

\end{itemize}

We note that Whittle index methods have been used to address AoI-related scheduling problems without considering the query arrival process (see, e.g., \cite{9241401,8437712,8935400,8919842}). 
In these works, the authors carefully analyzed the structures of the formulated RMABs and derived the value functions in closed form. They then established the Whittle indexability of these problems using the closed-form value functions.
However, in our problem, the query arrival processes at the BS result in bi-dimensional sub-MDP states and complex state transitions. This makes deriving closed-form value functions and subsequently a closed-form Whittle index challenging, if not impossible. Thus, the methods used to establish Whittle indexability in previous works are not applicable to our problem.


\textbf{\textit{Notations:}} In this paper, $\mathbb{Z}^+$ represents the set of positive integers, $\mathbb{E}[\cdot]$ denotes the operator of expectation, entity $[\cdot]$ is the representation of a vector containing the same type of elements, and $\langle\cdot \rangle$ denotes a tuple containing different types of elements.

\section{System Model and Problem Formulation}
\begin{figure}[t]
	\centering
	\includegraphics[width=0.45\textwidth]{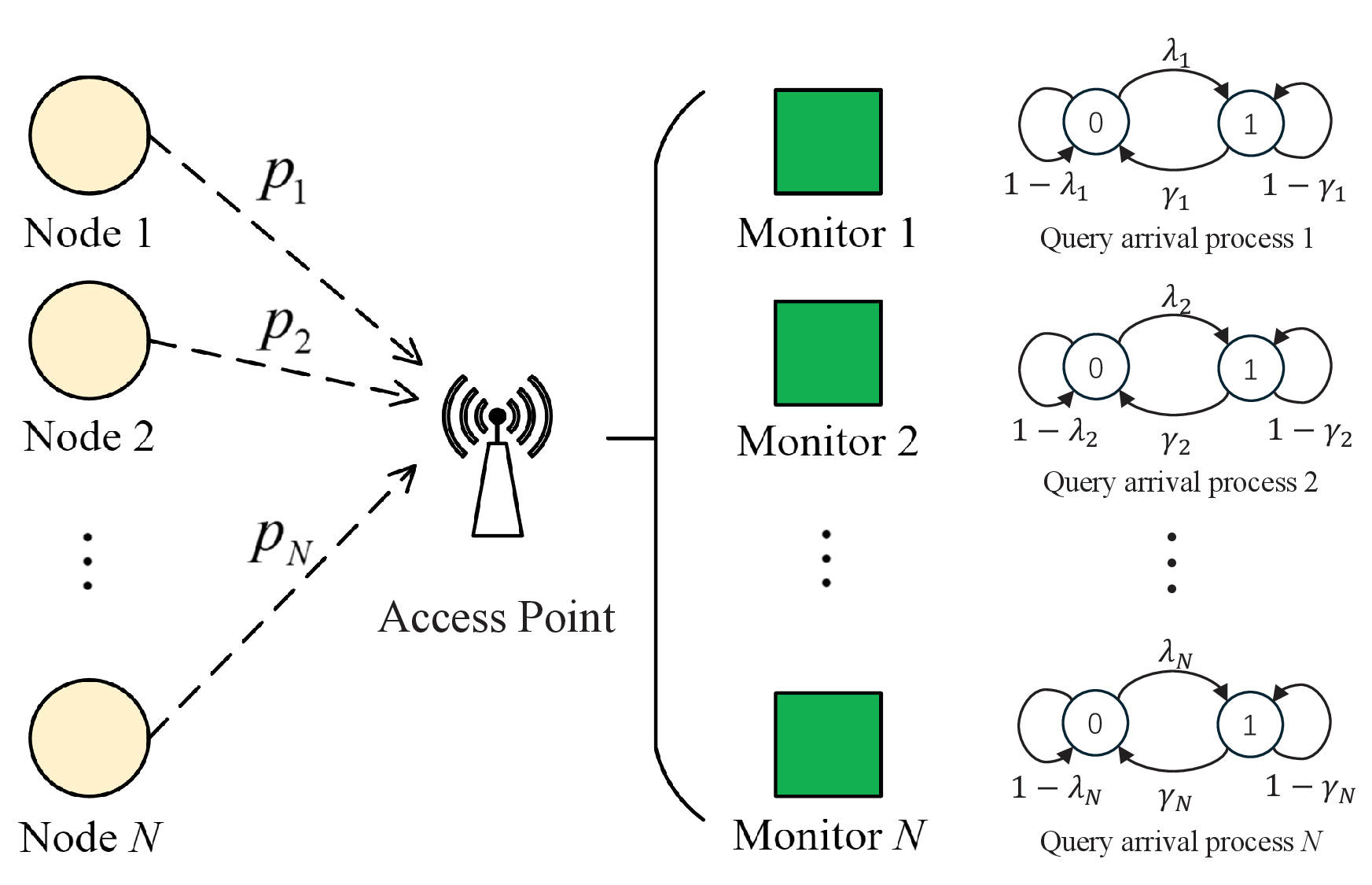}
	\caption{The wireless multiuser uplink network with Markovian query arrival processes at the monitors.}
	\label{Network}
\vspace{-1em} 
\end{figure}
\subsection{Network Model}
As depicted in Fig. \ref{Network}, we consider a multiuser wireless uplink
network where one base station (BS) coordinates and receives transmissions of time-sensitive status updates of $N$ end devices. 
Time is slotted and indexed by $t\in\{1,2,\cdots,T\}$, and the end devices, also called nodes hereafter, are indexed by $i\in\{1,2,\cdots,N\}$.
The status update packets of the nodes are delivered to the BS via $M$ ($0<M<N$) orthogonal error-prone channels, and one transmission takes one time slot.
A channel can be occupied by the status update transmission of only one node to avoid collision. 
Furthermore, a node can access at most one channel in each time slot. Assume that the status update packets of node $i$ are transmitted to the BS with success rate $p_i\in(0,1]$ in all $M$ channels.
The transmissions of status update packets in the uplink are centrally coordinated by the BS.
Further, the generate-at-will model is adopted for the generation of status updates.
Specifically, at the beginning of each time slot, the BS grants at most $M$ nodes, and the granted nodes generate their new status updates and transmit the status update packets to the BS concurrently. 
Let $a_i(t)$ denote the indicator whether node $i$ is scheduled in slot $t$, where $a_i(t)=1$, if node $i$ is scheduled, and $a_i(t)=0$, otherwise.
Then, we have $\sum_{i=1}^N a_i(t)\le M$.

We assume that each node has a corresponding monitor using the information of the status updates received from the node at the BS side.
The monitors do not always need the received status updates. Instead, each monitor $i$ uses the latest received status update from node $i$ only at the instants when queries arrive at monitor $i$.
The query arrival processes of nodes are characterized by independent binary Markov chains, as shown in Fig. \ref{Network}.
The state of query arrival process $i$ is denoted by $q_i(t)\in\{0,1\}$, where $q_i(t)=1$ represents that a query arrives at monitor $i$ in slot $t$, and $q_i(t)=0$ otherwise.
Moreover, $\lambda_i$ and $\gamma_i$ are transition probability from $q_i(t)=0$ to $q_i(t)=1$ and that in the opposite direction, respectively.

\subsection{Information Freshness Metric and Problem Formulation}\label{SectionAoI}
Since the information is considered by the monitors when the queries arrive, we employ the QAoI metric, originally proposed in \cite{9676636}, to capture the information freshness of all nodes at the BS.
To characterize the QAoI mathematically, we first introduce the AoI metric \cite{5984917}, defined as the time elapsed since the generation of the latest received message.
Accordingly, in the generate-at-will model, the evolution of the AoI of node $i$ in time slot $t$, denoted by $D_i(t)$, can be described by
\begin{equation}\label{EvoAoI}
	D_i(t+1)=\begin{cases}
		1, &\text{if status update of node}\ i\text{ is}\\
		&\text{received by the BS in slot}\ t,\\
		D_i(t) +1, &\text{otherwise}.
	\end{cases}
\end{equation}
Recalling the query mechanism of the network that $D_i(t)$ is useful for monitor $i$ only when $q_i(t)=1$, the definition of the QAoI of node $i$ in slot $t$ is given by $D_{q,i}(t)\triangleq D_i(t)q_i(t)$.

In this work, the long-term expected sum QAoI (ESQAoI) is embraced as the performance metric, which is mathematically defined as $\frac{1}{NT}\mathbb{E}\left[\sum^T_{t=1}\sum^N_{i=1}D_{q,i}(t)\Big|\bm{\pi}\right]$,
where $\bm{\pi}$ represents a stationary deterministic scheduling policy, and the expectation is computed over the entire system's dynamics.
The goal of this work is to devise a scheduling policy $\bm{\pi}$ for the BS to minimize the long-term ESQAoI. 
Specifically, we have the following optimization problem
\begin{equation}\label{P}
    \begin{split}
    &\min_{\bm{\pi}}\quad  \limsup_{T\to \infty}\frac{1}{NT}\mathbb{E}\left[\sum^T_{t=1}\sum^N_{i=1}D_{q,i}(t)\Big|\bm{\pi}\right],\\
    &\mbox{s.t.,}\quad 
    \sum^N_{i=1}a_i(t)\le M,\forall t.
    \end{split}
\end{equation}
The values of $\lambda_i$'s, $\gamma_i$'s, $p_i$'s, along with the local observations of AoI $D_i(t)$'s and query $q_i(t)$'s, are available to the BS for determining the scheduling of node groups at the beginning of each time slot.

\section{MDP Formulation and Relaxation}

\subsection{MDP Components}
The decision-making problem \eqref{P} can be naturally modeled as an MDP with the following components:

\underline{\textit{States:}} 
The state in slot $t$ can be denoted by $\bm{s}(t) \triangleq \langle \bm{D}(t),\bm{q}(t)\rangle$, where $\bm{D}(t)\triangleq \left[ D_1(t),\dots,D_N(t)\right] \in \{1,2,\cdots,D_m\}^N$ and $\bm{q}(t)\triangleq \left[ q_1(t),\dots,q_N(t)\right] \in \{0,1\}^N$, respectively.
Note that we impose an upper bound $D_m\in\mathbb{Z}^+$ on the AoI to ensure the state space of the MDP remains finite.

\underline{\textit{Actions:}} The action in slot $t$ is denoted by $\bm{a}(t)=[a_1(t),a_2(t),\cdots,a_N(t)]\in \{0,1\}^N$.
Since the scheduling constraint, we have $\sum^N_{i=1}a_i(t)\le M$.

\underline{\textit{Transition Functions:}} Denoted by $\Pr[\bm{s}(t+1)|\bm{s}(t),\bm{a}(t)]$, the transition function characterizes the probability of transiting from $\bm{s}(t)$ to $\bm{s}(t+1)$ when $\bm{a}(t)$ is taken.
As the evolutions of $\bm{D}(t)$ and $\bm{q}(t)$ are independent of each other, we have
\begin{equation}
\begin{split}
    &\Pr[\bm{s}(t+1)|\bm{s}(t),\bm{a}(t)]\\
    &=\Pr[\bm{D}(t+1)|\bm{D}(t),\bm{a}(t)]\Pr[\bm{q}(t+1)|\bm{q}(t)],
\end{split}
\end{equation}
where 
\begin{equation}
    \Pr\left[ \bm{D}(t+1)|\bm{D}(t),\bm{a}(t) \right]\!=\!\prod_{i=1}^{N}\Pr\left[D_i(t+1)|D_i(t),a_i(t) \right]
\end{equation}
with
\begin{equation}
\begin{split}  &\Pr[D_i(t+1)|D_i(t),a_i(t)]=\\
    &\begin{cases}
        p_i,& \text{if }a_i(t)=1, D_i(t+1)=1,\\
        1-p_i,& \text{if }a_i(t)=1, 
        D_i(t+1)=\min\{D_m,D_i(t)+1\},\\
        1,& \text{if }a_i(t)=0,
        D_i(t+1)=\min\{D_m,D_i(t)+1\},\\
        0,& \text{otherwises.}
    \end{cases}
\end{split}
\end{equation}
In addition, 
\begin{equation}
    \Pr\left[ \bm{q}(t+1)|\bm{q}(t) \right] =\prod_{i=1}^{N}\Pr\left[q_i(t+1)|q_i(t)\right],
\end{equation}
where
\begin{equation}\label{Trans_d}
	\Pr\left[ q_i(t+1)|q_i(t)\right]=
	\begin{cases}
		\lambda_i, &\text{if}\ q_i(t+1)=1, q_i(t)=0,\\
		1-\lambda_i, &\text{if}\ q_i(t+1)=0, q_i(t)=0,\\
        \gamma_i, &\text{if}\ q_i(t+1)=0, q_i(t)=1,\\
        1-\gamma_i, &\text{if}\ q_i(t+1)=1, q_i(t)=1,\\
		0, & \text{otherwise.}
	\end{cases} 
\end{equation}

\underline{\textit{Reward:}} The immediate reward in slot $t$ is defined as $r[\bm{s}(t)]\triangleq\sum_{i=1}^N D_{q,i}(t)$.
We consider the expected average reward $\frac{1}{NT}\mathbb{E}\{\sum_{t=1}^T r[\bm{s}(t)]|\pi\}$ for the MDP, which aligns with the long-term ESQAoI performance.
Another commonly applied reward is the expected discounted reward, given by $\mathbb{E}\{\sum_{t=1}^T \beta^{t-1}r[\bm{s}(t)]|\pi,\bm{s}(1)\}$ with an initial state~$\bm{s}(1)$, where $\beta\in(0,1)$ is the discounted factor.

We remark that the formulated MDP is equivalent to a restless multi-armed bandit (RMAB) with $N$ arms.
In particular, $a_i(t)=0$ corresponds to taking the passive action to arm $i$, and $a_i(t)=1$ means taking the active action to arm $i$ in slot~$t$.

The optimal policy for the considered scheduling problem can be found by solving the formulated MDP/RMAB through the standard dynamic programming (DP) method.
However, to ensure the accuracy of the optimal policy when applying the truncation, $D_m$ cannot be too small, thereby leading to a significantly large state space of the MDP when $N$ is large.
In light of this, tackling the formulated MDP via the DP method will suffer from the curse of dimensionality \cite{papadimitriou1987complexity}.
This drives us to develop a more efficient approach.

\subsection{MDP Relaxation for the Whittle Index approach}
We employ Whittle index method \cite{Whittle_1988}, which is a typical technique for solving RMABs.
This is because this method, which can decompose a multi-agent MDP into multiple sub-problems if the MDP is indexable, can be implemented with low complexity.
More specifically, Whittle index method assigns an index to each state of each sub-MDP and schedules the $M$ nodes with the highest indices in their current states at each time slot.
Additionally, Whittle index policy is known to be asymptotically optimal as the number of agents increases~\cite{Weber_Weiss_1990}.

Following Whittle index  approach \cite{Whittle_1988}, we first relax the scheduling constrain of problem \eqref{P}, yielding
\begin{equation}\label{PR}
    \begin{split}
    &\min_{\bm{\pi}}\quad  \limsup_{T\to \infty}\frac{1}{NT}\mathbb{E}\left[\sum^T_{t=1}\sum^N_{i=1}D_{q,i}(t)\Big|\bm{\pi}\right],\\
    &\mbox{s.t.,}\quad 
    \limsup_{T\to \infty}\frac{1}{T}\sum_{t=1}^T\sum^N_{i=1}a_i(t)\le M.
    \end{split}
\end{equation}
Remark that the minimization of problem \eqref{PR} is a lower bound to that of \eqref{P}.
Then, we can adopt Lagrange relaxation by introducing a dual variable $C\in \mathbb{R}$ and decouple the corresponding Lagrangian into $N$ sub-problems, given by
\begin{equation}\label{DPR}
    \min_{\pi_i}\quad  \limsup_{T\to \infty}\frac{1}{T}\sum^T_{t=1}\mathbb{E}\left[D_{q,i}(t)+Ca_i(t)\Big|\pi_i\right]
\end{equation}
for $i\in\{1,2,\cdots,N\}$, where $\pi_i$ is a stationary deterministic policy associated with node $i$.
Note that $C$ can be treated as the cost of scheduling. 
Clearly, sub-problem $i$ can be formulated as an independent sub-MDP $i$ with state $s_i=\langle D_i,q_i\rangle\in \mathcal{S}_i\triangleq \{1,2,\cdots,D_m\}\times \{0,1\}$, action $a_i\in \mathcal{A}_i\triangleq \{0,1\}$, and the expected average reward given by the cost function of \eqref{DPR}.
The optimal policy that achieves the minimum in \eqref{DPR} is denoted by $\pi^*_i$.
The mathematical definitions of these sub-MDPs are symmetric, thus we focus on an arbitrary sub-MDP $i$ in the following.
For brevity, we omit index $i$ hereafter unless it is necessary.

By Whittle index approach, an MDP can employ Whittle index policy if the corresponding sub-MDP is proven to be Whittle indexable, which depends on the structure of the corresponding sub-MDP.
To proceed, we investigate the formulated sub-MDP and present the following proposition:


\begin{proposition}\label{PropVF}
The formulated sub-MDP is an unichain MDP, which has the following properties.
Given a stationary deterministic policy $\pi$ for a sub-MDP, there exist a scalar $R^{\pi}$ and a function $h$ that satisfy the following Bellman’s equation
\begin{equation}
    R^{\pi}+h^{\pi}(s)=D^{q}+Ca^{\pi}+\sum_{s\in\mathcal{S}}\Pr(s'|s,a^{\pi})h^{\pi}(s'),
\end{equation}
where $R^{\pi}$ is the expected average reward under policy $\pi$ and satisfies
\begin{equation}
    R^{\pi}=\lim_{\beta\to 1}(1-\beta)V_{\beta}^{\pi}(s),\forall s\in\mathcal{S},
\end{equation}
and $h^{\pi}(s)$ is the relative cost function, presented as
\begin{equation}
    h^{\pi}(s)=\lim_{\beta\to 1}[V_{\beta}^{\pi}(s)-V_{\beta}^{\pi}(s_0)],\forall s\in\mathcal{S},
\end{equation}
for a reference state $s_0\in\mathcal{S}$, where $V_{\beta}^{\pi}(s)$ is the expected discounted cost function of the sub-MDP under $\pi$, given~by
\begin{equation}
    V_{\beta}^{\pi}(s)\!= \!\limsup_{T\to \infty}\sum^T_{t\!=\!1}\mathbb{E}\!\left[\beta^{t-1}\!(D_q(t)\!+\!Ca(t))\Big|\pi,\!s(1)\!=\!s\right].
\end{equation}
Moreover, there exists 
\begin{equation}
    \pi^*=\arg\min_{\pi}R^{\pi}=\arg\min_{\pi}R^{\pi}+h^{\pi}(s),
\end{equation}
\begin{equation}
    \pi^*_{\beta}=\arg\min_{\pi}V_{\beta}^{\pi}(s),
\end{equation}
for all $s\in\mathcal{S}$, such that $\lim_{\beta\to 1}\pi_{\beta}^*=\pi^*$.
\end{proposition}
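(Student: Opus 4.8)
The plan is to establish Proposition~\ref{PropVF} in three stages: (i) show the sub-MDP is unichain, (ii) invoke standard average-cost MDP theory to obtain the Bellman equation and the vanishing-discount relations, and (iii) establish the convergence of the discount-optimal policies to an average-optimal one.

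First I would verify the \emph{unichain} property. The sub-MDP has finite state space $\mathcal{S}=\{1,\dots,D_m\}\times\{0,1\}$ and two actions. I would argue that under any stationary deterministic policy $\pi$ the induced Markov chain on $\mathcal{S}$ has a single recurrent class (plus possibly transient states). The key observations: the query component $q$ evolves as an irreducible two-state chain (since $\lambda,\gamma\in(0,1)$), independent of the policy; and the AoI component $D$ is driven up by $+1$ each slot unless a successful transmission resets it to $1$. If $\pi$ ever schedules in some reachable state, then with probability $p>0$ the AoI resets, so state $D=1$ paired with each recurrent $q$-value is reachable; if $\pi$ never schedules from any state that is reached, then $D$ climbs to $D_m$ and stays there, giving the recurrent class $\{D_m\}\times\{0,1\}$. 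Either way there is exactly one recurrent class, so the chain is unichain; hence the sub-MDP is a (weakly communicating, in fact unichain) average-cost MDP.

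Next, given unichain-ness, the Bellman (Poisson) equation
\[
R^{\pi}+h^{\pi}(s)=D^{q}+Ca^{\pi}+\sum_{s'\in\mathcal{S}}\Pr(s'|s,a^{\pi})h^{\pi}(s')
\]
follows from the standard theory of average-cost Markov reward processes on a finite state space with a single recurrent class: $R^{\pi}$ is the stationary expected reward and $h^{\pi}$ is the relative value (bias) function, unique up to an additive constant, which I would pin down by the normalization $h^{\pi}(s_0)=0$. The vanishing-discount identities $R^{\pi}=\lim_{\beta\to1}(1-\beta)V_{\beta}^{\pi}(s)$ and $h^{\pi}(s)=\lim_{\beta\to1}[V_{\beta}^{\pi}(s)-V_{\beta}^{\pi}(s_0)]$ are the classical Abelian/Tauberian (Blackwell) limits for a fixed stationary policy on a finite unichain MDP; because the reward is bounded (by $D_m$) these limits exist and are finite, and the difference quotient is bounded uniformly in $\beta$. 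I would cite a standard reference (e.g.\ Puterman) for these facts rather than reprove them. The existence of a stationary $\pi^{*}$ attaining $\min_{\pi}R^{\pi}$ and simultaneously minimizing the right-hand side of the Bellman equation is again standard for finite unichain MDPs, as is the existence of a $\beta$-discount-optimal stationary policy $\pi^{*}_{\beta}$.

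The only nonroutine point — and the step I expect to be the main obstacle — is the last assertion, $\lim_{\beta\to1}\pi^{*}_{\beta}=\pi^{*}$, i.e.\ that some average-optimal policy is the limit of discount-optimal policies. Since there are only finitely many stationary deterministic policies, $\pi^{*}_{\beta}$ takes values in a finite set, so ``$\lim_{\beta\to1}\pi^{*}_{\beta}$'' means that $\pi^{*}_{\beta}$ is eventually constant as $\beta\uparrow1$; this is exactly the Blackwell-optimality phenomenon for finite MDPs. I would argue it by contradiction: if no single policy were discount-optimal on a sequence $\beta_n\to1$, pass to a subsequence along which $\pi^{*}_{\beta_n}\equiv\tilde\pi$; by optimality $V_{\beta_n}^{\tilde\pi}(s)\le V_{\beta_n}^{\pi}(s)$ for every $\pi$ and every $s$, and multiplying by $(1-\beta_n)$ and letting $n\to\infty$ using the vanishing-discount limit gives $R^{\tilde\pi}\le R^{\pi}$ for all $\pi$, so $\tilde\pi$ is average-optimal; taking $\pi^{*}:=\tilde\pi$ (the Blackwell-optimal policy, which exists for finite MDPs) then gives the claim. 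The subtlety to be careful about is ties in the discounted problem — there can be several $\beta$-optimal policies — which is why the statement only asserts existence of \emph{some} $\pi^{*}$ with this property, and the Blackwell-optimality argument delivers precisely that.
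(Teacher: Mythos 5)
Your route differs from the paper's: the paper never attempts to show that \emph{every} stationary policy induces a unichain Markov chain. Instead it verifies the pairwise accessibility condition of Bertsekas (for every ordered pair $s,s'\in\mathcal{S}$ there exists a stationary deterministic policy under which $s'$ is reached from $s$ with positive probability in some finite number of slots), doing so by explicitly diagonalizing the $2\times 2$ query transition matrix to show all entries of $\mathbf{P}^{\tau}$ are positive and by constructing, for the AoI component, an always-idle policy (when $D<D'$) or a ``schedule until one success, then idle'' policy (when $D\ge D'$); it then invokes Bertsekas' Propositions 4.2.1 and 4.2.6 to obtain the Bellman equation, the vanishing-discount limits, and the convergence of the discount-optimal policies in one package. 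Your decomposition (per-policy unichain structure, then Puterman-style unichain theory, then a Blackwell-optimality argument for $\lim_{\beta\to1}\pi^*_{\beta}=\pi^*$) is a legitimate alternative in outline, and your Blackwell step is standard once one cites the existence of a policy that is $\beta$-optimal for all $\beta$ near $1$.

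The genuine gap is in your unichain verification, which is the only non-textbook step of your plan. Your dichotomy conflates reachability with recurrence and never shows that the recurrent class is the same from every initial state, which is what unichain means; worse, the blanket claim is actually false in the boundary case $p=1$, which the model permits ($p\in(0,1]$). For example, with $p=1$ the policy that schedules only in states with $D=5$ and stays idle for all larger $D$ induces two disjoint recurrent classes: the cycle inside $\{1,\dots,5\}\times\{0,1\}$ (every visit to $D=5$ resets the age with certainty) and the absorbing set $\{D_m\}\times\{0,1\}$ reached from any start with $D\ge 6$. For $p<1$ the per-policy unichain conclusion is true, but the correct argument is different from yours: under any stationary policy and from any initial state, a run of $D_m$ consecutive transmission failures (probability at least $(1-p)^{D_m}>0$) drives $D$ to $D_m$, and irreducibility of the query chain makes both $\langle D_m,0\rangle$ and $\langle D_m,1\rangle$ reachable; since any state reachable from a recurrent state lies in the same recurrent class, every recurrent class contains $\langle D_m,0\rangle$, hence there is exactly one. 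To cover $p=1$ you must either treat it separately (restricting to the threshold policies actually used later) or fall back on the accessibility condition the paper verifies, which holds for all $p\in(0,1]$ because a single scheduled slot succeeds with probability $p>0$.
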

\begin{proof}
    See Appendix \ref{Proof_PropVF}.  
\end{proof}

\begin{remark}
    $V_{\beta}^{\pi}(s)$ is the long-term reward of the formulated sub-MDP when we define the reward of the sub-MDP as the expected discounted reward $\mathbb{E}\{\sum_{t=1}^T \beta^{t-1}[D_q(t)+Ca(t)]|\pi,s(1)\}$ rather than the expected average reward.
    Proposition \ref{PropVF} reveals the relationship between the cost functions of the sub-MDPs under these two rewards and the relationship between the corresponding optimal policies.
    That is, the $R^{\pi}$ and $\pi^*$ are the limit points of $(1-\beta)V_{\beta}^{\pi}(s)$ and $\pi^*_{\beta}$, respectively, as $\beta\to 1$.
    Given that the formulated sub-MDP under the expected discounted reward is more tractable to analyze, we are motivated to comprehend the structure of the formulated sub-MDP by exploring $V_{\beta}^{\pi}(s)$ and $\pi^*_{\beta}$.
\end{remark}

Hereafter, we denote \textit{the formulated sub-MDP under the expected average reward} by $\mathcal{M}$, and \textit{the formulated sub-MDP under the expected discounted reward} by $\mathcal{M}_{\beta}$.
Furthermore, to facilitate the exploration of $V_{\beta}^{\pi}(s)$, we refine its expression~as
\begin{equation}
    V^{\pi}_{\beta}(s)=Q^{\pi}_{\beta}(s;a^{\pi})=J^{\pi}_{\beta}(s)+CA^{\pi}_{\beta}(s),
\end{equation}
where 
\begin{equation}
    J^{\pi}_{\beta}(s)=\limsup_{T\to \infty}\sum^T_{t\!=\!1}\mathbb{E}\!\left[\beta^{t-1}\!D_q(t)\Big|\pi,\!s(1)\!=\!s\right]
\end{equation}
denotes the expected discounted total QAoI, 
\begin{equation}
    A^{\pi}_{\beta}(s)=\limsup_{T\to \infty}\sum^T_{t\!=\!1}\mathbb{E}\!\left[\beta^{t-1}\!a(t)\Big|\pi,\!s(1)\!=\!s\right]
\end{equation}
represents the expected discounted total active times under policy $\pi$ and beginning at state $s$, and
\begin{equation}
\begin{split}
    &Q^{\pi}_{\beta}(s;a)\\
    &=\limsup_{T\to \infty}\sum^T_{t\!=\!1}\mathbb{E}\!\left[\beta^{t-1}\!(D_q(t)\!+\!Ca(t))\Big|\pi,\!s(1)\!=\!s,a(1)\!=\!a\right]\\
    &=J^{\pi}_{\beta}(s;a)+CA^{\pi}_{\beta}(s;a)
\end{split}
\end{equation}
is the action-cost function under policy $\pi$ and beginning at state $s$ and action $a$, where
\begin{equation}
    J^{\pi}_{\beta}(s;\!a)\!=\!\limsup_{T\to \infty}\!\sum^T_{t\!=\!1}\mathbb{E}\!\left[\beta^{t-1}\!D_q(t)\Big|\pi,\!s(1)\!=\!s,\!a(1)\!=\!a\right]\!,
\end{equation}
and
\begin{equation}
    A^{\pi}_{\beta}(s;\!a)\!=\!\limsup_{T\to \infty}\sum^T_{t\!=\!1}\mathbb{E}\!\left[\beta^{t-1}\!a(t)\Big|\pi,\!s(1)\!=\!s,a(1)\!=\!a\right].
\end{equation}
We denote $R^{\pi}$, $V^{\pi}_{\beta}(s)$, $Q^{\pi}_{\beta}(s;a)$, $J^{\pi}_{\beta}(s)$, and $A^{\pi}_{\beta}(s)$ under the corresponding optimal policies by $R^*$, $V^*_{\beta}(s)$, $Q^*_{\beta}(s;a)$, $J^*_{\beta}(s)$, and $A^*_{\beta}(s)$, respectively.
Besides, we can write
\begin{equation}
\begin{split}
    V^*_{\beta}(s)=\min\{Q^*_{\beta}(s;a=0),Q^*_{\beta}(s;a=1)\}.
\end{split}  
\end{equation}
On this basis, we verify the indexability of the formulated sub-MDP in the next section.

\section{Whittle Indexability}

In this section, we analyze the structure of $\mathcal{M}_{\beta}$ for establishing Whittle indexability of the formulated sub-MDP.
To proceed, we first introduce a definition of the indexability, stated as follows.

\begin{definition}\label{DWA}
    Given a sub-MDP under the expected average reward, we define the passive set as $\mathcal{P}(C)\triangleq\{s\in\mathcal{S}|\pi^*(s)=0\}$, which represents the subset of all states that will take passive action under at least one optimal policy.
    Then, the sub-MDP is Whittle indexable if the following properties hold:
    \begin{enumerate}
        \item[a)] $\mathcal{P}(-\infty)=\emptyset$, and $\mathcal{P}(+\infty)=\mathcal{S}$.
        \item[b)] $\mathcal{P}(C_1)\subseteq\mathcal{P}(C_2)$ for $C_1
        \le C_2$.
    \end{enumerate}
\end{definition}
Then, if Definition \ref{DWA} is satisfied, the Whittle index is defined as
\begin{equation}\label{TDWA}
    W(s)=\inf\{C|s\in\mathcal{P}(C)\},
\end{equation}
equivalently the scheduling cost $C$ that makes the active and passive actions equally desirable for $\pi^*$.
Similarly, we have
\begin{definition}\label{DWD}
    Given a sub-MDP under the expected discounted reward, we define the passive set as $\mathcal{P}_{\beta}(C)\triangleq\{s\in\mathcal{S}|\pi^*_{\beta}(s)=0\}$.
    Then, the sub-MDP is Whittle indexable if the following properties are satisfied:
    \begin{enumerate}
        \item[a)] $\mathcal{P}_{\beta}(-\infty)=\emptyset$, and $\mathcal{P}_{\beta}(+\infty)=\mathcal{S}$.
        \item[b)] $\mathcal{P}_{\beta}(C_1)\subseteq\mathcal{P}_{\beta}(C_2)$, for $C_1
        \le C_2$.
    \end{enumerate}
\end{definition}
Then, if Definition \ref{DWD} holds, the Whittle index is defined as
\begin{equation}
    W_{\beta}(s)=\inf\{C|s\in\mathcal{P}_{\beta}(C)\},
\end{equation}
i.e., the cost $C$ that makes $Q^*_{\beta}(s,a=0)=Q^*_{\beta}(s,a=1)$.

Based on Proposition \ref{PropVF} implying that $\pi^*$ has the same properties as $\pi^*_{\beta}$, we aim to verify whether Definition \ref{DWA} holds for $\mathcal{M}$ by verifying whether Definition \ref{DWD} holds for $\mathcal{M}_{\beta}$.

To proceed, we explore the structure of $V^*_{\beta}(s)$ and $\pi^*_{\beta}$.
According to \cite{sennott1989average}, $V^*_{\beta}(s)$ can be obtained by iteration method.
Specifically, let $V_{\beta,n}(s)$ ($n\in\mathbb{Z}^+$) be the cost-to-go function such that $V_{\beta,1}(s)=0$ for all $s\in\mathcal{S}$, and $\bar{D}\triangleq\min\{D+1,D_m\}$.
Then, we have
\begin{equation}
    V_{n+1,\beta}(D,\!q)\!=\!\min\!\left\{Q_{n+1,\beta}(D,\!q;0),\!Q_{n+1,\beta}(D,q;1)\right\}\!,
\end{equation}
where
\begin{equation}\label{Q00}
\begin{split}
    &Q_{n+1,\beta}(D,0;0)\\
    &=\beta\left[(1-\lambda)V_{n,\beta}(\bar{D},0)+\lambda V_{n,\beta}(\bar{D},1)\right],
\end{split}
\end{equation}
\begin{equation}
\begin{split}
    &Q_{n+1,\beta}(D,1;0)\\
    &=D+\beta\left[\gamma V_{n,\beta}(\bar{D},0)+(1-\gamma) V_{n,\beta}(\bar{D},1)\right],
\end{split}
\end{equation}
and $Q_{n+1,\beta}(D,0;a=1)$, $Q_{n+1}(D,1;a=1)$, shown in \eqref{Q01} and \eqref{Q11} at the top of the next page.
\begin{figure*}[ht]
\centering
\begin{equation}\label{Q01}
\begin{split}
    Q_{n+1,\beta}(D,0;1)
    =C\!+\!\beta\left[(1\!-\!p)(1\!-\!\lambda) V_{n,\beta}(\bar{D},0)
    +(1\!-\!p)\lambda V_{n,\beta}(\bar{D},1)
    +p(1\!-\!\lambda)V_{n,\beta}(1,0)+p\lambda V_{n,\beta}(1,1)\right].
\end{split}  
\end{equation}
\begin{equation}\label{Q11}
\begin{split}
    Q_{n+1,\beta}(D,1;1)=C\!+\!D\!+\!\beta\!\left[(1\!-\!p)\gamma V_{n,\beta}(\bar{D},0)+(1\!-\!p)(1\!-\!\gamma) V_{n,\beta}(\bar{D},1)
    +p\gamma V_{n,\beta}(1,0)+p(1\!-\!\gamma) V_{n,\beta}(1,1)\right].
\end{split}  
\end{equation} 
\vspace{-0.5cm}
\hrulefill
\end{figure*}

Then, $V_{n,\beta}(s)\to V^*_{\beta}(s)$ as $n\to\infty$ for all $s\in\mathcal{S}$ and $\beta$, and we can propound the following lemma.

\begin{lemma}\label{Vp}
    $V^*_{\beta}(D,q)$ is non-decreasing with regard to $D$.
\end{lemma}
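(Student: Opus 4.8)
The plan is to prove the claim by induction on the value-iteration index $n$, establishing that each iterate $V_{n,\beta}(D,q)$ is non-decreasing in $D$ for \emph{both} query states $q\in\{0,1\}$ simultaneously, and then passing to the limit. The base case is immediate, since $V_{\beta,1}(D,q)=0$ is constant in $D$. For the inductive step I would assume that $V_{n,\beta}(\cdot,0)$ and $V_{n,\beta}(\cdot,1)$ are both non-decreasing, and then show that all four action-value functions $Q_{n+1,\beta}(D,q;a)$ inherit this monotonicity.

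The observation driving the argument is that $\bar{D}=\min\{D+1,D_m\}$ is itself non-decreasing in $D$, so every term of the form $V_{n,\beta}(\bar{D},q')$ is non-decreasing in $D$ by the induction hypothesis, while every term of the form $V_{n,\beta}(1,q')$ is constant in $D$. Inspecting \eqref{Q00}, the displayed expression for $Q_{n+1,\beta}(D,1;0)$, and \eqref{Q01}--\eqref{Q11}, each $Q_{n+1,\beta}(D,q;a)$ is a convex (non-negative, weight-summing-to-$\beta$) combination of such terms, plus possibly an explicit $+D$ summand (present precisely when $q=1$) and an additive constant $C$ (present precisely when $a=1$); all of these pieces are non-decreasing in $D$, hence so is each $Q_{n+1,\beta}(D,q;a)$. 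Since the pointwise minimum of non-decreasing functions is non-decreasing, $V_{n+1,\beta}(D,q)=\min\{Q_{n+1,\beta}(D,q;0),Q_{n+1,\beta}(D,q;1)\}$ is non-decreasing in $D$ for each $q$, closing the induction.

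Finally, I would invoke the convergence $V_{n,\beta}(D,q)\to V^*_{\beta}(D,q)$ as $n\to\infty$ (guaranteed by the discussion preceding the lemma, cf.\ \cite{sennott1989average}) and use that a pointwise limit of non-decreasing functions is non-decreasing to conclude that $V^*_{\beta}(D,q)$ is non-decreasing in $D$.

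I do not anticipate a serious obstacle: the argument is a standard preservation-of-monotonicity-under-the-Bellman-operator induction. The two points requiring care are (i) carrying the induction hypothesis over both query states at once, since the recursion for the value at $q=0$ couples to the value at $q=1$ and vice versa, and (ii) handling the AoI truncation correctly, which preserves monotonicity only because $D\mapsto\min\{D+1,D_m\}$ is itself monotone; it is worth stating this truncation step cleanly, as an analogous argument recurs when the threshold structure of $\pi^*_{\beta}$ is analyzed later.
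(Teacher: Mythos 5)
Your proposal is correct and follows essentially the same route as the paper: an induction on the value-iteration index showing that each $Q_{n+1,\beta}(D,q;a)$ inherits monotonicity in $D$ from the induction hypothesis (via $\bar{D}=\min\{D+1,D_m\}$ being non-decreasing and $V_{n,\beta}(1,q')$ being constant), then taking the pointwise minimum over actions and passing to the limit $n\to\infty$. Your treatment is, if anything, slightly more explicit than the paper's about the truncation and the additive $D$ and $C$ terms, but there is no substantive difference.
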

\begin{proof}
    See Appendix \ref{Proof_Vp}.
\end{proof}

Built upon Lemma \ref{Vp}, we can identify the threshold structure of the optimal policy $\pi^*_{\beta}$, as presented in the following theorem.
\begin{theorem}\label{ThP}
    Given $\lambda,\gamma\in(0,1),\beta\in(0,1)$, and $C\ge 0$, the optimal policy $\pi^*_{\beta}$ of $\mathcal{M}_{\beta}$ has a threshold structure in $D$, i.e., given $q$, there exists a threshold $H^*_q(C)\in\{1,2,\cdots, D_m+1\}$ such that it is optimal to schedule when $D\ge H^*_q(C)$ and optimal to remain idle when $D<H^*_q(C)$.
\end{theorem}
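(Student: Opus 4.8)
\textbf{Proof plan for Theorem \ref{ThP}.}

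The plan is to show that, for each fixed query state $q\in\{0,1\}$, the advantage of scheduling over idling, namely the difference
\[
\Delta_q(D)\triangleq Q^*_{\beta}(D,q;0)-Q^*_{\beta}(D,q;1),
\]
is non-decreasing in $D$; once this monotonicity is in hand, the threshold $H^*_q(C)$ is simply the smallest $D$ at which $\Delta_q(D)\ge 0$ (and $D_m+1$ if no such $D$ exists), so the set $\{D:\pi^*_\beta(D,q)=1\}$ is an up-set in $D$, which is exactly the claimed threshold structure. First I would write out $\Delta_q(D)$ using \eqref{Q00}--\eqref{Q11}: the cost terms $C$ cancel, the immediate-age terms $D$ cancel between $Q^*_\beta(D,q;0)$ and $Q^*_\beta(D,q;1)$, and what remains is $\beta p$ times a difference of the form
\[
\Delta_q(D)=\beta p\Bigl[\,c_0^{(q)}V^*_\beta(\bar D,0)+c_1^{(q)}V^*_\beta(\bar D,1)-c_0^{(q)}V^*_\beta(1,0)-c_1^{(q)}V^*_\beta(1,1)\,\Bigr],
\]
where $(c_0^{(0)},c_1^{(0)})=(1-\lambda,\lambda)$ and $(c_0^{(1)},c_1^{(1)})=(\gamma,1-\gamma)$, both convex combinations. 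Since $\bar D=\min\{D+1,D_m\}$ is non-decreasing in $D$, and since, by Lemma \ref{Vp}, $V^*_\beta(\cdot,0)$ and $V^*_\beta(\cdot,1)$ are each non-decreasing in their first argument, every term $V^*_\beta(\bar D,0)$ and $V^*_\beta(\bar D,1)$ is non-decreasing in $D$ while the subtracted terms $V^*_\beta(1,0),V^*_\beta(1,1)$ are constants. As the coefficients $c_0^{(q)},c_1^{(q)}$ are non-negative, $\Delta_q(D)$ is a non-negative combination of non-decreasing functions of $D$, hence non-decreasing in $D$. Defining $H^*_q(C)=\min\{D\in\{1,\dots,D_m\}:\Delta_q(D)\ge 0\}$ with the convention $H^*_q(C)=D_m+1$ when the set is empty then yields the stated policy.

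The main obstacle is that the bare monotonicity of $V^*_\beta$ from Lemma \ref{Vp} gives $\Delta_q(D)$ non-decreasing, which is enough to conclude that the \emph{active} region is an up-set, but one should double-check the tie-breaking convention (scheduling when $\Delta_q(D)=0$) and the degenerate endpoints: $H^*_q(C)=1$ means ``always schedule'' and $H^*_q(C)=D_m+1$ means ``never schedule,'' both of which are legitimate threshold policies and are consistent with the requirement $C\ge 0$ only affecting whether these extremes occur. A secondary subtlety is that Theorem \ref{ThP} asserts a threshold in $D$ \emph{for each $q$ separately}, with possibly different thresholds $H^*_0(C)$ and $H^*_1(C)$; the argument above is carried out for each $q$ independently, so no cross-comparison of the two thresholds is needed here (such comparisons, if any, belong to the later steady-state analysis). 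I would also note that the value-iteration characterization $V_{n,\beta}\to V^*_\beta$ quoted before Lemma \ref{Vp} justifies passing the monotonicity from the finite-horizon iterates $V_{n,\beta}$ (where it is preserved under the min and the affine updates \eqref{Q00}--\eqref{Q11}, as shown in the proof of Lemma \ref{Vp}) to the limit $V^*_\beta$, so the whole argument is rigorous rather than merely formal.

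Finally, I would remark that this proof uses $\lambda,\gamma\in(0,1)$ only to keep the coefficients $c_j^{(q)}$ strictly positive (so that $\Delta_q$ genuinely depends on the value function and the problem does not degenerate), and uses $\beta\in(0,1)$ through the existence and monotonicity of $V^*_\beta$ established earlier; the restriction $C\ge 0$ is not needed for the threshold structure itself but is retained for consistency with the indexability analysis that follows, where non-negative scheduling cost is the regime of interest.
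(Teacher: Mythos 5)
Your proposal is correct and follows essentially the same route as the paper's proof: both arguments use Lemma \ref{Vp} (monotonicity of $V^*_{\beta}(D,q)$ in $D$) to show that the advantage of scheduling over idling at fixed $q$ is non-decreasing in $D$, so the active set is an up-set in $D$ for each query state separately, which is the threshold structure. One slip to fix: the cost $C$ does \emph{not} cancel in $\Delta_q(D)=Q^*_{\beta}(D,q;0)-Q^*_{\beta}(D,q;1)$, since only the active action incurs it; the correct expression (cf. \eqref{CompareD0}) is
\begin{equation*}
\Delta_q(D)=\beta p\Bigl[c_0^{(q)}V^*_\beta(\bar D,0)+c_1^{(q)}V^*_\beta(\bar D,1)-c_0^{(q)}V^*_\beta(1,0)-c_1^{(q)}V^*_\beta(1,1)\Bigr]-C.
\end{equation*}
Had $C$ truly cancelled, Lemma \ref{Vp} would give $\Delta_q(D)\ge 0$ for every $D$, forcing the threshold to be $1$ always, which contradicts the non-trivial thresholds the theorem is meant to allow. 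Since $-C$ is constant in $D$, your monotonicity argument and the resulting threshold definition are unaffected, so the proof goes through once the formula is corrected.
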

\begin{proof}
    See Appendix \ref{Proof_ThP}.
\end{proof}

Define $\pi(H_0;H_1)$ as a threshold-type policy which chooses to be idle when $D<H_0$ for $q=0$ and $D<H_1$ for $q=1$, while chooses to schedule when $D\ge H_0$ for $q=0$ and $D\ge H_1$ for $q=1$.
Let $\Pi_h$ denote the set of all possible $\pi(H_0;H_1)$, i.e., $\Pi_h=\{\pi(H_0;H_1)|1\le H_q \le D_m+1\}$.
Recall that $\pi^*$ is the limit point of $\pi^*_{\beta}$ as $\beta\to 1$.
Obviously, $\pi^*,\pi^*_{\beta}\in\Pi_h$.
Based on Theorem \ref{ThP}, we have the following corollary.
\begin{corollary}\label{EqualTh}
    Given $C$, if $\lambda+\gamma=1$, the optimal threshold-type policy $\pi^*_{\beta}$ of $\mathcal{M}_{\beta}$ has $H^*_0(C)=H^*_1(C)$.
\end{corollary}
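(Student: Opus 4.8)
The plan is to exploit the fact that the hypothesis $\lambda+\gamma=1$ renders the query process i.i.d., so that the query state $q$ enters the sub-MDP $\mathcal{M}_{\beta}$ only through the one-slot QAoI cost $D_q=qD$ and not at all through the dynamics. I expect this to force the two ``$q$-layers'' of the optimal value function to differ by exactly $D$, which in turn makes the optimal action at $(D,0)$ and at $(D,1)$ coincide for every $D$.

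First I would record the elementary observation that $\lambda+\gamma=1$ gives $\Pr[q(t+1)=1\mid q(t)=0]=\lambda$ and $\Pr[q(t+1)=1\mid q(t)=1]=1-\gamma=\lambda$, so $q(t+1)$ is an i.i.d.\ $\mathrm{Bernoulli}(\lambda)$ variable whose law does not depend on $q(t)$. Substituting $\gamma=1-\lambda$ into the Bellman optimality equation satisfied by $V^{*}_{\beta}$ (equivalently, into the value-iteration recursions \eqref{Q00}, \eqref{Q01}, \eqref{Q11} together with the companion recursion for $Q_{n+1,\beta}(D,1;0)$), the $q=1$ continuation term $\gamma V^{*}_{\beta}(\bar{D},0)+(1-\gamma)V^{*}_{\beta}(\bar{D},1)$ becomes identical to the $q=0$ continuation term $(1-\lambda)V^{*}_{\beta}(\bar{D},0)+\lambda V^{*}_{\beta}(\bar{D},1)$, and similarly for the $a=1$ case. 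Since the action cost $Ca$ and the $D$-transition $\Pr[\bar{D}\mid D,a]$ are already independent of $q$, this yields $Q^{*}_{\beta}(D,1;a)=D+Q^{*}_{\beta}(D,0;a)$ for every $D$ and every $a\in\{0,1\}$. Minimizing over $a$ then gives $V^{*}_{\beta}(D,1)=D+V^{*}_{\beta}(D,0)$ and, crucially, shows that the minimizing action at $(D,0)$ and at $(D,1)$ is the same.

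The conclusion follows at once. The ``scheduling advantage'' $Q^{*}_{\beta}(D,q;0)-Q^{*}_{\beta}(D,q;1)$ is independent of $q$ because the additive $D$ cancels, so for each $D$ it is optimal to schedule at $(D,0)$ if and only if it is optimal to schedule at $(D,1)$; any ties between the two actions occur at the same $D$ regardless of $q$, so a consistent tie-break returns the same action in both query states. By Theorem~\ref{ThP}, $\pi^{*}_{\beta}$ is threshold-type in $D$ for each fixed $q$, and the equivalence just established forces the two thresholds to coincide, i.e.\ $H^{*}_{0}(C)=H^{*}_{1}(C)$. Via Proposition~\ref{PropVF}, letting $\beta\to 1$ transfers the identity $V^{*}_{\beta}(D,1)=D+V^{*}_{\beta}(D,0)$, and hence the conclusion, to the average-reward optimum as well.

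I do not foresee a real obstacle: the result is a short structural observation rather than a calculation. The only points needing care are (i) stating cleanly that, under $\lambda+\gamma=1$, the $q$-marginal of the one-step kernel is the fixed law $\mathrm{Bernoulli}(\lambda)$, so that the discounted continuation value is literally $q$-free, and (ii) the tie-breaking bookkeeping in the last step; the hypothesis $C\ge 0$ is inherited from Theorem~\ref{ThP} and plays no role in the value-function identity itself. An equivalent route is to read the identity $Q_{n+1,\beta}(D,1;a)=D+Q_{n+1,\beta}(D,0;a)$ directly off the recursions \eqref{Q00}--\eqref{Q11} --- it holds for every $n\ge 1$ once $\gamma=1-\lambda$ is substituted, irrespective of $V_{n,\beta}$ --- and then pass to the limit $n\to\infty$.
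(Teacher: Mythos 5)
Your proposal is correct and follows essentially the same route as the paper: substituting $\gamma=1-\lambda$ into the Bellman/Q-value expressions shows that the scheduling advantage $Q^*_{\beta}(D,q;0)-Q^*_{\beta}(D,q;1)$ is independent of $q$ (the paper computes the two differences directly and concludes via a short contradiction with the threshold structure, while you package the same computation as the shift identity $Q^*_{\beta}(D,1;a)=D+Q^*_{\beta}(D,0;a)$ and conclude directly). The i.i.d.\ Bernoulli$(\lambda)$ interpretation of the query process is a nice conceptual gloss but does not change the substance of the argument.
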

\begin{proof}
    See Appendix \ref{Proof_EqualTh}.
\end{proof}

Conventionally, further theoretical analysis is required to obtain the closed-form cost function $V^*_{\beta}(s)$'s of $\mathcal{M}_{\beta}$, which can lay the foundation for verifying Whittle indexability, i.e., Definition \ref{DWD}. 
However, it is well-known that deriving closed-form expressions for $V^*_{\beta}(s)$ is challenging, especially when the state space is multi-dimensional. For our problem, the binary Markovian query
arrival process results in a bi-dimensional state and complex
state transitions within each sub-MDP. 
To circumvent this problem, we apply the active time (AT) condition, recently introduced in \cite[Theorem 3]{zhoueasier}, which offers a means to simplify the verification of Whittle indexability.
The condition is presented as follows.
\begin{theorem}\label{TAT}
    (The AT condition) A sub-MDP is indexable if the following conditions hold:
    \begin{enumerate}
        \item[a)] $\mathcal{P}_{\beta}(-\infty)=\emptyset$, and $\mathcal{P}_{\beta}(+\infty)=\mathcal{S}$. 
        \item[b)] $A^{\pi}_{\beta}(s;1)\ge A^{\pi}_{\beta}(s;0),\forall s\in\mathcal{S}$ and $\forall \pi\in\Pi'$, where $\Pi'$ denotes the collection of all policies that are possible to be optimal. 
    \end{enumerate}
\end{theorem}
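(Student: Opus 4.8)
The plan is to verify the two items of Definition~\ref{DWD} under hypotheses a) and b). Item a) of Definition~\ref{DWD} is literally hypothesis a), so the entire content is item b): $\mathcal{P}_{\beta}(C_1)\subseteq\mathcal{P}_{\beta}(C_2)$ whenever $C_1\le C_2$. To pin down what must be shown, fix a state $s$ and make the dependence on the scheduling cost explicit by writing $\Delta_{\beta}(s;C)\triangleq Q^{*}_{\beta}(s;1)-Q^{*}_{\beta}(s;0)$ as a function of $C$. Since the passive action is optimal at $s$ exactly when $Q^{*}_{\beta}(s;0)\le Q^{*}_{\beta}(s;1)$, we have $s\in\mathcal{P}_{\beta}(C)\iff\Delta_{\beta}(s;C)\ge 0$. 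Hence item b) is equivalent to the claim that for every $s$ the map $C\mapsto\Delta_{\beta}(s;C)$ is non-decreasing: if this holds and $s\in\mathcal{P}_{\beta}(C_1)$ with $C_1\le C_2$, then $\Delta_{\beta}(s;C_2)\ge\Delta_{\beta}(s;C_1)\ge 0$, i.e. $s\in\mathcal{P}_{\beta}(C_2)$. So the proof reduces to monotonicity of $\Delta_{\beta}(s;\cdot)$.

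The second step would exploit the affine-in-$C$ structure inherited from the instantaneous reward $D_{q}(t)+Ca(t)$. For any stationary deterministic policy $\pi$, $V^{\pi}_{\beta}(s)=J^{\pi}_{\beta}(s)+CA^{\pi}_{\beta}(s)$ is affine in $C$ with slope $A^{\pi}_{\beta}(s)$. Because a discounted MDP over a finite state--action space admits an optimal stationary deterministic policy and there are only finitely many such policies, $V^{*}_{\beta}(s)=\min_{\pi}\big(J^{\pi}_{\beta}(s)+CA^{\pi}_{\beta}(s)\big)$ is a continuous, concave, piecewise-linear function of $C$, and the $C$-axis splits into finitely many intervals on each of which a single policy $\pi$ is simultaneously optimal from every state. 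Substituting into the one-step identity $Q^{*}_{\beta}(s;a)=D_{q}(s)+Ca+\beta\sum_{s'}\Pr(s'|s,a)V^{*}_{\beta}(s')$ shows that $Q^{*}_{\beta}(s;0)$, $Q^{*}_{\beta}(s;1)$, and hence $\Delta_{\beta}(s;\cdot)$, are continuous and piecewise linear in $C$; and on an interval where the optimal policy is a fixed $\pi\in\Pi'$, the slope of $Q^{*}_{\beta}(s;a)$ equals $a+\beta\sum_{s'}\Pr(s'|s,a)A^{\pi}_{\beta}(s')=A^{\pi}_{\beta}(s;a)$, the discounted active time of the policy that takes action $a$ now and follows $\pi$ afterwards.

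Putting the two steps together, the slope of $\Delta_{\beta}(s;\cdot)$ on each of these intervals equals $A^{\pi}_{\beta}(s;1)-A^{\pi}_{\beta}(s;0)$ for the policy $\pi$ optimal on that interval; since $\pi\in\Pi'$, hypothesis b) forces this slope to be non-negative. A continuous piecewise-linear function whose slope is non-negative on every piece is non-decreasing, so $\Delta_{\beta}(s;\cdot)$ is non-decreasing for every $s$, which by the reduction above gives item b) of Definition~\ref{DWD} and hence indexability. I expect the only delicate point to be the bookkeeping at the kinks: one must check that between consecutive breakpoints the same stationary deterministic policy is optimal from all states, handle the non-uniqueness of the optimal policy at a breakpoint by passing to one-sided slopes (a policy optimal on an adjacent open interval stays optimal at the endpoint, since $C\mapsto V^{\pi}_{\beta}(s)$ is continuous), and justify the slope identity relating $\partial_{C}Q^{*}_{\beta}(s;a)$ to $A^{\pi}_{\beta}(s;a)$ — essentially an envelope argument made elementary by the finiteness of the policy set.
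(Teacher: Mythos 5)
You should note at the outset that the paper itself contains no proof of Theorem \ref{TAT}: it is imported verbatim from \cite[Theorem 3]{zhoueasier} and used as a black box (the paper's own verification work lies in Proposition \ref{VAT} and the check of condition a)), so there is no in-paper argument to compare against. Judged on its own, your reconstruction is correct and is essentially the standard proof of this sufficient condition. The reduction of item b) of Definition \ref{DWD} to monotonicity in $C$ of $\Delta_\beta(s;C)=Q^*_\beta(s;1)-Q^*_\beta(s;0)$ is valid under the paper's convention that $s\in\mathcal{P}_\beta(C)$ whenever the passive action is optimal under at least one optimal policy, i.e. $\Delta_\beta(s;C)\ge 0$. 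The structural facts you invoke do hold for this finite-state, finite-action discounted sub-MDP: for fixed stationary deterministic $\pi$, $V^\pi_\beta(s)=J^\pi_\beta(s)+CA^\pi_\beta(s)$ is affine in $C$; $V^*_\beta(s)$ is a pointwise minimum of finitely many such affine functions, hence concave and piecewise linear; and for each $\pi$ the set of $C$ at which $\pi$ is optimal from every state is a closed interval (it is the zero set of the convex, nonnegative function $\max_{s}\bigl(V^\pi_\beta(s)-V^*_\beta(s)\bigr)$), so finitely many such intervals cover the line, which settles the bookkeeping you flagged. On the interior of each interval the one-step Bellman identity gives $\partial_C Q^*_\beta(s;a)=a+\beta\sum_{s'}\Pr(s'|s,a)A^\pi_\beta(s')=A^\pi_\beta(s;a)$, the optimal $\pi$ on that interval belongs to $\Pi'$, so hypothesis b) makes every linear piece of $\Delta_\beta(s;\cdot)$ have nonnegative slope, and continuity disposes of the kinks. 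I see no gap; the only caveat is that, as a contribution to this paper, the statement needs only the citation, whereas your argument would serve as a self-contained substitute for it.
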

Remark that by Theorem \ref{ThP}, we can let $\Pi_h$ be $\Pi'$ when the AT condition is applied to verify the indexability of the considered sub-MDP.
In addition, built on Theorem \ref{ThP}, we can obtain the following proposition.
\begin{proposition}\label{VAT}
    For $\mathcal{M}_{\beta}$, we have $A^{\pi}_{\beta}(s;1)\ge A^{\pi}_{\beta}(s;0),\forall s\in\mathcal{S}$ and $\forall \pi\in\Pi_h$.
\end{proposition}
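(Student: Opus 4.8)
The plan is to compare the expected discounted number of active times starting from the same state $s=(D,q)$ under the same threshold-type policy $\pi=\pi(H_0;H_1)\in\Pi_h$, but with the first action forced to $a(1)=1$ versus $a(1)=0$. The key observation is that after the first slot, both processes follow the \emph{same} policy $\pi$, so the difference $A^{\pi}_{\beta}(s;1)-A^{\pi}_{\beta}(s;0)$ is governed entirely by how the two possible next states induced by $a(1)=1$ versus $a(1)=0$ propagate forward. Concretely, $a(1)=0$ sends the AoI component to $\bar D=\min\{D+1,D_m\}$ deterministically, while $a(1)=1$ sends it to $1$ with probability $p$ and to $\bar D$ with probability $1-p$; the query component evolves by the same Markov kernel in both cases and is therefore irrelevant to the comparison. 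Hence it suffices to show that, under the fixed policy $\pi$, the expected discounted future active count is \emph{non-increasing in the AoI coordinate}: starting from a smaller AoI never yields more scheduling activity. Writing $G^{\pi}_{\beta}(D,q)\triangleq A^{\pi}_{\beta}((D,q);a^{\pi}(D,q))$ for the (policy-consistent) discounted active count, the inequality $A^{\pi}_{\beta}(s;1)\ge A^{\pi}_{\beta}(s;0)$ reduces, after peeling off the common first slot, to
\begin{equation}
p\,\beta\bigl(G^{\pi}_{\beta}(\bar D,q)-p\cdot 0 \text{-type terms}\bigr)\ \text{arithmetic},
\end{equation}
i.e. to $G^{\pi}_{\beta}(1,q')\le G^{\pi}_{\beta}(\bar D,q')$ for each relevant $q'$, which is exactly monotonicity of $G^{\pi}_{\beta}$ in $D$.

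The first step is therefore to establish that, for every fixed $\pi(H_0;H_1)\in\Pi_h$, the function $G^{\pi}_{\beta}(D,q)$ is non-decreasing in $D$ for each fixed $q$. I would prove this by the same value-iteration / coupling argument already used for Lemma~\ref{Vp}: define the truncated active-count iterates $G_{n,\beta}$ with $G_{1,\beta}\equiv 0$ and the one-step recursion dictated by $\pi$ (add $1$ in slot $n{+}1$ exactly when $\pi$ prescribes the active action in the current state, then discount the appropriate convex combination of $G_{n,\beta}$ values at the successor AoI states $\bar D$ and, if active, $1$). Then argue by induction on $n$ that $G_{n,\beta}(D,q)$ is non-decreasing in $D$. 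The induction is clean on the two "regimes" of a threshold policy: for $D$ below the threshold the action is passive so the recursion only moves $D\mapsto\bar D$, which preserves monotonicity; for $D$ at or above the threshold the action is active and the recursion mixes $G_{n,\beta}(\bar D,\cdot)$ with $G_{n,\beta}(1,\cdot)$ with weights $1-p,p$ plus the constant $1$ — and since $1$ is the smallest AoI value, $G_{n,\beta}(1,\cdot)\le G_{n,\beta}(D',\cdot)$ for all $D'$ by the inductive hypothesis, so the "jump down" at the threshold boundary cannot break monotonicity. The only delicate boundary is the threshold crossing point $D=H_q-1 \to D=H_q$, where the action switches; here I would verify directly that the active-regime value at $H_q$ dominates the passive-regime value at $H_q-1$, using that at $H_q$ one pays an extra $+1$ (scheduling) and that passive from $H_q-1$ and the $(1-p)$-branch of active from $H_q$ both land at comparable AoI states, while the $p$-branch lands at $1$ — the expected-count ordering then follows from the inductive hypothesis. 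Taking $n\to\infty$ gives the claim for $G^{\pi}_{\beta}$, and letting $\beta$ be arbitrary in $(0,1)$ keeps it uniform.

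With monotonicity of $G^{\pi}_{\beta}$ in hand, the second step is a short direct computation: expand $A^{\pi}_{\beta}(s;1)$ and $A^{\pi}_{\beta}(s;0)$ using the transition kernel, cancel the identical query-transition factors, and reduce the difference to a nonnegative multiple of $\bigl(G^{\pi}_{\beta}(\bar D,q')-G^{\pi}_{\beta}(1,q')\bigr)$ summed over the two successor query states $q'$, plus a nonnegative contribution of the extra active slot in the $a(1)=1$ branch. Since $1\le \bar D$, each bracket is $\ge 0$ by Step~1, so $A^{\pi}_{\beta}(s;1)-A^{\pi}_{\beta}(s;0)\ge 0$, which is the assertion. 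I would also note the degenerate case $D=D_m$ (so $\bar D=D_m$): then the $(1-p)$-branch of the active action and the passive action coincide in AoI, the difference collapses to $p\beta\bigl(1+\beta(\cdots)\bigr)\ge 0$ trivially, and monotonicity is not even needed.

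\textbf{Main obstacle.} The crux is Step~1, specifically handling the discontinuity of a threshold policy at the switch point: a naive coupling that tries to dominate trajectory-by-trajectory fails because at $D=H_q$ the policy suddenly schedules and pays the extra unit, so one cannot simply say "smaller $D$, fewer future actions" pathwise. The argument has to be made in expectation via the value-iteration recursion, exploiting that $D=1$ is the global minimizer of $G_{n,\beta}(\cdot,q)$ so that the reset branch of the active action always lands at the smallest value — this is what rescues monotonicity across the jump. I expect the bookkeeping of which successor AoI states appear (and the $\min\{D+1,D_m\}$ truncation at the top) to be the fiddly part, but the structural idea is robust and parallels the proof of Lemma~\ref{Vp}.
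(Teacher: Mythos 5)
There is a genuine gap, and it sits exactly at the step you call a "short direct computation." Peeling off the first slot gives
\begin{equation}
A^{\pi}_{\beta}\bigl((D,q);1\bigr)-A^{\pi}_{\beta}\bigl((D,q);0\bigr)
=1-\beta p\sum_{q'}\Pr(q'|q)\,\bigl[G^{\pi}_{\beta}(\bar D,q')-G^{\pi}_{\beta}(1,q')\bigr],
\end{equation}
because the passive branch puts full weight $\beta$ on $G^{\pi}_{\beta}(\bar D,\cdot)$ while the active branch puts weight $\beta(1-p)$ on $G^{\pi}_{\beta}(\bar D,\cdot)$ and $\beta p$ on $G^{\pi}_{\beta}(1,\cdot)$, plus the immediate $1$. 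So the bracket $G^{\pi}_{\beta}(\bar D,q')-G^{\pi}_{\beta}(1,q')$ enters with a \emph{negative} coefficient, not a nonnegative one: the monotonicity in $D$ you establish in Step~1 only shows that the subtracted term is nonnegative, which works against you rather than for you. What is actually needed is the complementary quantitative bound $\beta p\sum_{q'}\Pr(q'|q)[G^{\pi}_{\beta}(\bar D,q')-G^{\pi}_{\beta}(1,q')]\le 1$, i.e., that the discounted gain in future activations from starting at a higher AoI never exceeds one immediate activation. This bound is not a consequence of monotonicity and is not crude (a simple single-query-state, threshold-$H$ computation shows the subtracted term can be made arbitrarily close to $1$ as $\beta\to1$), and after peeling off one step it is essentially a restatement of the proposition itself — so your reduction leaves the real work undone. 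Your remark about the "degenerate case" $D=D_m$ is also off: for every $D$ the $(1-p)$-branch of the active action and the passive action land at the same AoI, so nothing special collapses at $D_m$ and the same gap remains there.

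The paper closes this gap by a different route: it constructs two copies of the sub-MDP that differ only in the first action, couples them so that the query-arrival sample paths coincide and the $j$th transmission attempt of one copy has the same success/failure outcome as the $j$th attempt of the other, and then invokes the dominance results of the AT-condition paper (\cite[Lemmas 9--10, Theorem 11]{zhoueasier}) for threshold policies: under this coupling the activation sequence of the active-start copy dominates that of the passive-start copy (each activation of the passive-start copy is matched by an earlier, hence less-discounted-away, activation of the active-start copy), which yields $A^{\pi}_{\beta}(s;1)\ge A^{\pi}_{\beta}(s;0)$ directly. If you want to avoid citing those lemmas, you would need to prove, e.g.\ by induction on value-iteration, the two-sided statement (monotonicity \emph{together with} the gap bound $\beta p\,\mathbb{E}_{q'}[G^{\pi}_{\beta}(\bar D,q')-G^{\pi}_{\beta}(1,q')]\le 1$), or reproduce the pathwise coupling argument; as written, your Step~2 does not follow from your Step~1.
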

\begin{proof}
    See Appendix \ref{Proof_VAT}.
\end{proof}
It is straightforward that $\pi^*_{\beta}$ will take $a=1$ for all states when $C\le 0$, and will take $a=0$ for all states when $C=+\infty$, i.e., condition a) of Theorem \ref{TAT} holds for $\mathcal{M}_{\beta}$.
Hence, we can conclude that $\mathcal{M}_{\beta}$ satisfies the AT condition, and thus is Whittle indexable.
That is, Definition \ref{DWD} holds for $\pi^*_{\beta}$.
Based on Proposition \ref{PropVF}, $\pi^*$ is also threshold-type, and satisfies Definition \ref{DWA}, i.e., $\mathcal{M}$ is also Whittle indexable, since these properties are independent of $\beta$.
In light of this, we aim to design effective algorithms for calculating Whittle indices of $\mathcal{M}$, corresponding to the considered problem \eqref{P}.

\section{Algorithms to calculate Whittle Index}

In this section, 
we first develop an algorithm inspired by \cite[Algorithm 1]{Akbarzadeh_Mahajan_2022} to efficiently calculate Whittle indices for the formulated MDP under the expected average reward. Subsequently, we devise a lower-complexity algorithm for error-free scenarios.

\subsection{Algorithm for Error-Prone Networks}
Ref. \cite[Algorithm 1]{Akbarzadeh_Mahajan_2022} proposed a universal algorithm that implements Whittle index calculation for any Whittle indexable MDPs.
However, \cite[Algorithm 1]{Akbarzadeh_Mahajan_2022} is applied to the MDPs under the expected discounted reward.
As such, we can only approximate Whittle indices of the considered problem with this algorithm by converging $\beta$ to $1$.
By a more in-depth analysis of $\mathcal{M}$, we develop a more efficient algorithm, which can produce more accurate Whittle indices for the considered problem, based on \cite[Algorithm 1]{Akbarzadeh_Mahajan_2022}. 

To proceed, we make the following definitions for $\mathcal{M}$.
Let $N_W$ represent the number of distinct Whittle indices.
Note that $N_W\le |\mathcal{S}|$.
Let $\Theta=\{C_1,C_2,\cdots,C_{N_W}\}$ where $C_1<C_2<\cdots<C_{N_W}$ denote the sorted list of distinct Whittle indices from smallest to largest.
Moreover, we define $C_0=0$.
Let $\mathcal{R}_w=\{\langle D,q\rangle\in\mathcal{S}:W(D,q)\le C_w\}$ represent the set of states whose corresponding Whittle index is less than or equal to $C_w$, for $w\in\{0,1,2,\cdots,N_W\}$.
Clearly, we have $\mathcal{R}_0=\emptyset$ and $\mathcal{R}_{N_W}=\mathcal{S}$.
Also, for $w\in\{0,1,2,\cdots,N_W-1\}$, let $\Gamma_{w+1}\gets\mathcal{R}_{w+1}\backslash \mathcal{R}_w$ denote the set of the states that have Whittle index $C_{w+1}$.

For any subset $\mathcal{X}\subseteq \mathcal{S}$, define a policy $\pi^\mathcal{X}$ such that
\begin{equation}
    \pi^\mathcal{X}(s)=
    \begin{cases}
        0, & \text{if }s\in \mathcal{X},\\
        1, & \text{otherwise.}
    \end{cases}
\end{equation}

Furthermore, we define
\begin{equation}\label{zeta}
    \zeta[\pi_1,\pi_2]\triangleq\frac{J^{\pi_1}-J^{\pi_2}}{A^{\pi_2}-A^{\pi_1}},
\end{equation}
where
\begin{equation}
    J^{\pi}\triangleq \lim_{\beta\to 1}(1-\beta)J^{\pi}_{\beta}(s),
\end{equation}
\begin{equation}
    A^{\pi}\triangleq \lim_{\beta\to 1}(1-\beta)A^{\pi}_{\beta}(s),
\end{equation}
for all $s\in\mathcal{S}$, are the expected average QAoI and the expected average active times of $\mathcal{M}$ under policy $\pi$.
Then, we have the following theorem.
\begin{theorem}\label{CalWI}
    For $\mathcal{M}$, we have
    \begin{enumerate}
        \item $\pi^{\mathcal{R}_w}\in \Pi_h,\forall w\in\{0,1,2,\cdots,N_W\}$.
        \item Assume $\pi^{\mathcal{R}_w}=\pi(H_0;H_1)$.
        Then, there exists $\bar{H}_0$ and $\bar{H}_1$ satisfying $H_0\le \bar{H}_0\le D_m$, $H_1\le \bar{H}_1\le D_m$, and $\bar{H}_0+\bar{H}_1-H_0-H_1>0$, such that $\pi^{\mathcal{R}_{w+1}}=\pi(\bar{H}_0;\bar{H}_1)$.
        This indicates
        \begin{equation}\label{Cse1}
        \begin{split}
            \Gamma_{w+1}&=\{\langle H_0+1,0 \rangle,\langle H_0+2,0 \rangle,\cdots,\langle \bar{H}_0,0 \rangle,\\
            &\langle H_1+1,1 \rangle,\langle H_1+2,1 \rangle,\cdots,\langle \bar{H}_1,1 \rangle\}.
        \end{split}
        \end{equation} 
        \item The value of Whittle index $C_{w+1}$ can be obtained by
        \begin{equation}\label{Cnexteq}
        \begin{split}
            C_{w+1}&=\zeta[\pi(H_a;H_1),\pi(H_0;H_1)]\\
            &=\zeta[\pi(H_0;H_b),\pi(H_0;H_1)],
        \end{split}
        \end{equation}
        for all $H_a\in\{H_0+1,\cdots,\bar{H}_0\}$ and $H_b\in\{H_1+1,\cdots,\bar{H}_1\}$.
        Note that $H_a$ or $H_b$ do not exist if $H_0=\bar{H}_0$ or $H_1=\bar{H}_1$.
        
        In addition, for all $s\in \Gamma_{w+1}$, $W(s)=C_{w+1}$.
    \end{enumerate}
    
\end{theorem}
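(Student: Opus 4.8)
The plan is to first determine the exact shape of every passive set $\mathcal{R}_w$ from the threshold structure of the optimal policy, then to read off the relation between consecutive thresholds from Whittle indexability, and finally to exploit the ``indifference at the index'' property to obtain the $\zeta$-formula. For part~1 I would start from the fact that the optimal policy of $\mathcal{M}$ is threshold-type: by Theorem~\ref{ThP} each $\pi^*_\beta$ lies in the finite set $\Pi_h$, and since $\pi^*=\lim_{\beta\to1}\pi^*_\beta$ by Proposition~\ref{PropVF}, also $\pi^*\in\Pi_h$; hence for each cost $C$ there is an optimal threshold policy $\pi(H_0^*(C);H_1^*(C))$. Whittle indexability of $\mathcal{M}$ (Definition~\ref{DWA}, established in the previous section) makes $\mathcal{P}(C)$ non-decreasing in $C$, hence each $H_q^*(C)$ non-decreasing, and from \eqref{TDWA} one gets $W(\langle D,q\rangle)=\inf\{C:D<H_q^*(C)\}$. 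With the usual convention of assigning indifferent states to the passive set at their Whittle index, it follows that $\langle D,q\rangle\in\mathcal{R}_w$ exactly when $D<H_q^*(C_w)$, i.e. $\mathcal{R}_w$ is the passive set of $\pi(H_0^*(C_w);H_1^*(C_w))$, so $\pi^{\mathcal{R}_w}=\pi(H_0^*(C_w);H_1^*(C_w))\in\Pi_h$ (and $\pi^{\mathcal{R}_0}=\pi(1;1)$ since $\mathcal{R}_0=\emptyset$).

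For part~2, write $\pi^{\mathcal{R}_w}=\pi(H_0;H_1)$ and $\pi^{\mathcal{R}_{w+1}}=\pi(\bar H_0;\bar H_1)$, both in $\Pi_h$ by part~1. The inclusion $\mathcal{R}_w\subseteq\mathcal{R}_{w+1}$ immediately gives $H_q\le\bar H_q$ for $q\in\{0,1\}$; I would obtain the bound $\bar H_q\le D_m$ by treating the final step --- at which the two remaining states $\langle D_m,0\rangle$ and $\langle D_m,1\rangle$ enter the passive set --- separately. Strictness of $\bar H_0+\bar H_1-H_0-H_1>0$ comes from the fact that $\Gamma_{w+1}=\mathcal{R}_{w+1}\setminus\mathcal{R}_w\neq\emptyset$ (there is at least one state of index $C_{w+1}$ by definition of the sorted list $\Theta$), combined with $|\Gamma_{w+1}|=(\bar H_0-H_0)+(\bar H_1-H_1)$ and non-negativity of both summands; the displayed form of $\Gamma_{w+1}$ in \eqref{Cse1} then follows from the threshold description in part~1.

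For part~3, the last claim that $W(s)=C_{w+1}$ for $s\in\Gamma_{w+1}$ is just the definition of $\Gamma_{w+1}$. For the $\zeta$-identity I would fix $H_a\in\{H_0+1,\dots,\bar H_0\}$ and compare the policies $\pi(H_a;H_1)$ and $\pi(H_0;H_1)=\pi^{\mathcal{R}_w}$: they agree outside $\Gamma_{w+1}$ and differ only on states of $\Gamma_{w+1}$, each of which is indifferent between the two actions at cost $C=C_{w+1}$, that cost being precisely its Whittle index. Since both policies additionally act optimally --- passively on $\mathcal{R}_w$ and actively on the states of index larger than $C_{w+1}$ --- both are optimal for $\mathcal{M}$ at $C=C_{w+1}$. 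As $\mathcal{M}$ is unichain (Proposition~\ref{PropVF}), all of its optimal policies share the same average reward, so $J^{\pi(H_a;H_1)}+C_{w+1}A^{\pi(H_a;H_1)}=J^{\pi(H_0;H_1)}+C_{w+1}A^{\pi(H_0;H_1)}$; solving for $C_{w+1}$ yields $C_{w+1}=\zeta[\pi(H_a;H_1),\pi(H_0;H_1)]$ as defined in \eqref{zeta}, where a strict version of Proposition~\ref{VAT} is used to guarantee $A^{\pi(H_a;H_1)}\neq A^{\pi(H_0;H_1)}$ so the denominator does not vanish. The identity involving $\pi(H_0;H_b)$ is the mirror argument on the $q=1$ slab.

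The step I expect to be the main obstacle is part~3: rigorously justifying that at the \emph{single} cost $C_{w+1}$ \emph{every} intermediate threshold policy $\pi(H_a;H_1)$ and $\pi(H_0;H_b)$ is simultaneously optimal. This needs the threshold structure of Theorem~\ref{ThP} together with a careful characterization of the set of optimal actions at a Whittle index value, namely that every state of $\Gamma_{w+1}$ is genuinely indifferent there rather than merely weakly preferring one action. A secondary technical point is the non-vanishing of the denominator in $\zeta$, which I would secure by strengthening Proposition~\ref{VAT} to strict monotonicity of the average active time in each threshold (exploiting $p\in(0,1]$ and the query dynamics).
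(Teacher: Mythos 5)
Your plan takes a genuinely different route from the paper's: you argue entirely inside the average-cost problem, deducing \eqref{Cnexteq} from ``all intermediate threshold policies are simultaneously optimal at $C_{w+1}$, and optimal policies of a unichain MDP share the same average cost.'' The paper never makes that argument directly; it invokes \cite[Theorem 2]{Akbarzadeh_Mahajan_2022} for the discounted sub-MDP $\mathcal{M}_{\beta}$ --- which is precisely the statement that $\pi^{\mathcal{R}_{\beta,w}}$ and $\pi^{\mathcal{R}_{\beta,w}\cup\{y\}}$ are both optimal at a common cost given by a ratio of differences of $J_{\beta}$ and $A_{\beta}$ --- and then transfers it to $\mathcal{M}$ through the vanishing-discount relations of Proposition \ref{PropVF}. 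The step you defer (``every state of $\Gamma_{w+1}$ is genuinely indifferent at $C_{w+1}$, hence $\pi(H_a;H_1)$ and $\pi(H_0;H_b)$ are optimal there'') is therefore not a technicality to be patched later: it is the mathematical heart of the theorem, and it is exactly what the citation supplies in the paper's proof. Definition \ref{DWA} together with \eqref{TDWA} only gives that passive is optimal for $C$ above $W(s)$ and active below; it does not by itself give two-sided optimality at $C=W(s)$, nor that a policy choosing actions arbitrarily on the indifferent states (and optimally elsewhere) remains average-cost optimal. To close this you would need either the discounted-case machinery the paper borrows (continuity of $Q^*_{\beta}(s;a)$ in $C$, then $\beta\to 1$ via Proposition \ref{PropVF}), or a self-contained average-cost argument, e.g., that $R^*(C)=\min_{\pi\in\Pi_h}\bigl(J^{\pi}+CA^{\pi}\bigr)$ is the lower envelope of finitely many affine functions of $C$ whose breakpoints are the $C_w$'s, combined with a verification step in the average-cost optimality equation of the unichain sub-MDP. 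As written, the proposal asserts this conclusion rather than proving it, so there is a genuine gap.

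Two secondary points. First, your fix for the denominator of \eqref{zeta} --- strengthening Proposition \ref{VAT} to strict monotonicity of $A^{\pi}$ in each threshold --- is not available in general: the paper itself defines $\bar{H}_0,\bar{H}_1$ in Algorithm \ref{AlgErrorProne} as the smallest thresholds at which $A^{\pi}$ changes, conceding that raising a threshold by one need not change the average active time (for instance when $p=1$ renders the affected states unreachable), and the theorem's caveat that $H_a$ or $H_b$ may not exist plays the same role; you should therefore establish $A^{\pi(H_0;H_1)}\neq A^{\pi(H_a;H_1)}$ only for the thresholds actually admitted in \eqref{Cnexteq}. Second, under your own identification ($\mathcal{R}_w$ equals the passive set of $\pi(H_0;H_1)$, i.e., the states with $D<H_q$), the difference $\mathcal{R}_{w+1}\setminus\mathcal{R}_w$ comes out as $\{\langle H_0,0\rangle,\dots,\langle \bar{H}_0-1,0\rangle\}\cup\{\langle H_1,1\rangle,\dots,\langle \bar{H}_1-1,1\rangle\}$, which matches \eqref{Cse1} only after an index shift; saying the displayed form ``follows from the threshold description'' glosses over this, and you need to fix one convention for how thresholds encode $\mathcal{R}_w$ so that part 2, the bound $\bar{H}_q\le D_m$, and the index assignment in part 3 are mutually consistent.
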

\begin{proof}
    See Appendix \ref{Proof_CalWI}.
\end{proof}

As stated in Theorem \ref{CalWI}, to determine Whittle indices for the sub-MDPs, we need to compute $J^{\pi}$ and $A^{\pi}$ for all $\pi \in \Pi_h$.
\begin{figure*}[t]
	\centering
	\includegraphics[width=0.95\textwidth]{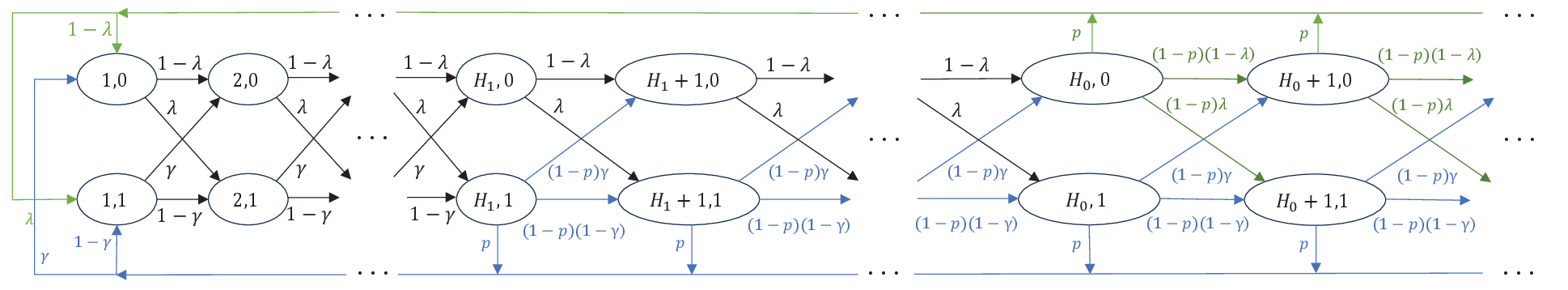}
	\caption{An example of a sub-MDP applying a thresh-type policy $\pi(H_0;H_1)$ with $H_1<H_0<D_m$, $p<1$.}
 \label{MCTH}
\end{figure*}
To this end, we consider the steady-state distribution of the sub-MDP applying $\pi \in \Pi_h$ when the channel is error-prone.
As depicted in Fig. \ref{MCTH}, we denote the steady-state probability of state $\langle D,q\rangle$ by $\mu_{D,q}$, and define $\bm{\mu}_D\triangleq [\mu_{D,0},\mu_{D,1}]^T$.
We state that all $\mu_{D,q}$ can be characterized by $\mu_{1,0}$ and $\mu_{1,1}$.
More specifically, let $H=\min\{H_0,H_1\}$ and $\hat{H}=\max\{H_0,H_1\}$.
Based on Fig. \ref{MCTH}, we can define $\mathbf{P}_1$, $\mathbf{P}_2$, $\mathbf{P}_3$, and $\mathbf{P}_4$, such that $\bm{\mu}_D=\mathbf{P}_1\bm{\mu}_{D-1}$ for $2\le D \le H$, $\bm{\mu}_D=\mathbf{P}_2\bm{\mu}_{D-1}$ for $H+1\le D \le \hat{H}$, $\bm{\mu}_D=\mathbf{P}_3\bm{\mu}_{D-1}$ for $\hat{H}+1\le D < D_m$, and $\bm{\mu}_{D_m}=\mathbf{P}_4\bm{\mu}_{D_m-1}$ for $D = D_m$.
To move forward, we introduce the following proposition on the expressions of these matrices.
\begin{proposition}\label{TransMatrix}
    Given $\mathcal{M}$ applying a policy $\pi(H_0;H_1)\in\Pi_h$, we have
    \begin{equation}\label{P1}
    \mathbf{P}_1=
    \begin{bmatrix}
        1-\lambda & \gamma\\
        \lambda & 1-\gamma
    \end{bmatrix},
\end{equation}
\begin{equation}\label{P2}
    \mathbf{P}_2=
    \begin{cases}
        \begin{bmatrix}
            1-\lambda & (1-p)\gamma\\
            \lambda & (1-p)(1-\gamma)
        \end{bmatrix},& \text{if }H_0>H_1,\\
        \begin{bmatrix}
            (1-p)(1-\lambda) & \gamma\\
            (1-p)\lambda & 1-\gamma
        \end{bmatrix},& \text{if }H_0<H_1,\\
        \mathbf{I},& \text{otherwise},
    \end{cases}
\end{equation}
\begin{equation}\label{P3}
    \mathbf{P}_3=
    \begin{bmatrix}
        (1-p)(1-\lambda) & (1-p)\gamma\\
        (1-p)\lambda & (1-p)(1-\gamma)
    \end{bmatrix}=
    (1-p)\mathbf{P}_1,
\end{equation}
and
\begin{equation}\label{P4}
    \mathbf{P}_4=
    \begin{cases}
        (\mathbf{I}-\mathbf{P}_3)^{-1}\mathbf{P}_3 & \text{if }H\le\hat{H}<D_m,\\
        (\mathbf{I}-\mathbf{P}_3)^{-1}\mathbf{P}_2 & \text{if }H<\hat{H}=D_m,\\
        (\mathbf{I}-\mathbf{P}_3)^{-1}\mathbf{P}_1 & \text{if }\hat{H}=H=D_m,\\
        (\mathbf{I}-\mathbf{P}_2)^{-1}\mathbf{P}_2 & \text{if }H<D_m,\hat{H}=D_m+1,\\
        (\mathbf{I}-\mathbf{P}_2)^{-1}\mathbf{P}_1 & \text{if }H=D_m,\hat{H}=D_m+1,\\
        (\mathbf{I}-\mathbf{P}_1)^{-1}\mathbf{P}_1 & \text{if }H=\hat{H}=D_m+1.\\
    \end{cases}
\end{equation}

\end{proposition}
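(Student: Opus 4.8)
The plan is to derive each matrix directly from the one-step transition structure of the sub-MDP under a fixed threshold-type policy $\pi(H_0;H_1)$, reading off the dynamics of $\langle D,q\rangle$ from Fig.~\ref{MCTH}, and then to handle the ``absorbing-like'' state $D_m$ separately by summing a geometric series over the self-loop region. First I would fix notation: for a transient layer with AoI value $D$, the vector $\bm{\mu}_D=[\mu_{D,0},\mu_{D,1}]^T$ collects the steady-state mass at that AoI and the two query states; since the query chain evolves independently of the action via \eqref{Trans_d}, the $q$-component of any transition matrix is always governed by $\lambda$ (from $q=0$ to $q=1$) and $\gamma$ (from $q=1$ to $q=0$), while the action (idle vs.\ schedule) only determines whether the $D$-coordinate advances from $D$ to $D+1$ (probability $1$ if idle, $1-p$ if scheduled) or resets to $1$ (probability $p$ if scheduled). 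The key observation is that for $D$ strictly below the threshold $H=\min\{H_0,H_1\}$ the policy is idle in \emph{both} query states, so the AoI always increments and the flow into layer $D$ is exactly $\mathbf{P}_1\bm{\mu}_{D-1}$ with $\mathbf{P}_1$ the pure query-transition matrix; this gives \eqref{P1}.

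Next I would treat the intermediate band $H+1\le D\le\hat H$, where exactly one of the two query states triggers scheduling (the one with the smaller threshold) and the other remains idle. If $H_0>H_1$, then $q=1$ schedules while $q=0$ stays idle, so the incoming mass at AoI $D$ in query state $0$ still comes with full weight from $\bm{\mu}_{D-1,0}$ and $\bm{\mu}_{D-1,1}$ after a $q$-transition, but the contribution routed through query state $1$ is discounted by $1-p$ because with probability $p$ the scheduled transmission resets the AoI to $1$ instead of producing AoI $D$; collecting these gives the first branch of \eqref{P2}. The case $H_0<H_1$ is symmetric with the roles of the two columns swapped, giving the second branch, and if $H_0=H_1$ there is no intermediate band so $\mathbf{P}_2=\mathbf{I}$ by convention. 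For the high band $\hat H+1\le D<D_m$ both query states schedule, so \emph{every} in-flow to a non-reset AoI level is discounted by $1-p$, which multiplies the entire query matrix by $1-p$ and yields $\mathbf{P}_3=(1-p)\mathbf{P}_1$ as in \eqref{P3}.

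The step I expect to be the main obstacle, and the one needing the most care, is $\mathbf{P}_4$, i.e.\ the relation $\bm{\mu}_{D_m}=\mathbf{P}_4\bm{\mu}_{D_m-1}$. Because AoI is capped at $D_m$ via $\bar D=\min\{D+1,D_m\}$, state $\langle D_m,q\rangle$ has a self-loop: mass that would advance past $D_m$ stays at $D_m$. Writing the balance at layer $D_m$ as $\bm{\mu}_{D_m}=\mathbf{Q}\,\bm{\mu}_{D_m}+\mathbf{P}\,\bm{\mu}_{D_m-1}$, where $\mathbf{Q}$ is the ``advance'' matrix active \emph{at} level $D_m$ and $\mathbf{P}$ is the matrix carrying mass from level $D_m-1$ into $D_m$, one solves $\bm{\mu}_{D_m}=(\mathbf{I}-\mathbf{Q})^{-1}\mathbf{P}\,\bm{\mu}_{D_m-1}$. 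The subtlety is that both $\mathbf{Q}$ and $\mathbf{P}$ depend on \emph{which band} the levels $D_m$ and $D_m-1$ fall into relative to $H$ and $\hat H$: if $\hat H<D_m$ both levels are in the high band, so $\mathbf{Q}=\mathbf{P}=\mathbf{P}_3$; if $H<\hat H=D_m$ then level $D_m$ is in the intermediate band ($\mathbf{Q}=\mathbf{P}_2$, but the incoming matrix from $D_m-1$, which is in the high band, is $\mathbf{P}_3$, wait---one must check carefully whether the transition \emph{into} $D_m$ uses the action at $D_m-1$ or at $D_m$; it uses the action at the source level $D_m-1$), and so on through the six cases; the remaining cases with $H$ or $\hat H$ equal to $D_m$ or $D_m+1$ shrink or eliminate bands and replace the relevant matrices by $\mathbf{P}_2$ or $\mathbf{P}_1$ accordingly. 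I would enumerate these six configurations explicitly, in each one identify the correct ``self-loop'' matrix (the action in force at level $D_m$) and the correct ``in-flow'' matrix (the action in force at level $D_m-1$), verify that $\mathbf{I}$ minus the self-loop matrix is invertible (it is, since its spectral radius is $<1$ whenever $p>0$ and a genuine geometric decay is present, which holds in every case except the degenerate ones where the matrix is a substochastic $(1-p)$ multiple or has a strict defect from the reset), and thereby recover each branch of \eqref{P4}. Finally, combining $\bm{\mu}_D=\mathbf{P}_k\bm{\mu}_{D-1}$ across all layers expresses every $\bm{\mu}_D$ as a product of these matrices applied to $\bm{\mu}_1=[\mu_{1,0},\mu_{1,1}]^T$, confirming the claim that the whole distribution is determined by $\mu_{1,0}$ and $\mu_{1,1}$.
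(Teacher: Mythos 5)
Your proposal is correct and follows essentially the same route as the paper: $\mathbf{P}_1$--$\mathbf{P}_3$ are read directly off the one-step transition structure under the threshold policy, and $\mathbf{P}_4$ is obtained by writing the balance equation at the capped level $D_m$ with its self-loop, $\bm{\mu}_{D_m}=\mathbf{Q}\bm{\mu}_{D_m}+\mathbf{P}\bm{\mu}_{D_m-1}$, and solving $(\mathbf{I}-\mathbf{Q})^{-1}\mathbf{P}$ case by case over the same six configurations of $H,\hat H$ relative to $D_m$ that the paper enumerates. The momentary hesitation about whether the in-flow into $D_m$ is governed by the action at $D_m-1$ or at $D_m$ is resolved by the principle you state (self-loop from the action in force at $D_m$, in-flow from the action at the source level $D_m-1$), which applied consistently reproduces exactly the paper's branches of \eqref{P4}.
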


\begin{proof}
    See Appendix \ref{Proof_TransMatrix}.
\end{proof}
Clearly, the characterization of $\mu_{D,q}$ in terms of $\mu_{D,0}$ and $\mu_{D,1}$ requires calculating multiple powers of $\mathbf{P}_1$, $\mathbf{P}_2$, and $\mathbf{P}_3$.
Accordingly, we have
\begin{equation}\label{CpP1}
    \mathbf{P}^n_1=(\mathbf{P}^T)^n=
    \begin{bmatrix}
    \frac{\lambda(1-\lambda-\gamma)^n+\gamma}{\lambda+\gamma} & \frac{-\gamma(1-\lambda-\gamma)^n+\gamma}{\lambda+\gamma}\\
    \frac{-\lambda(1-\lambda-\gamma)^n+\lambda}{\lambda+\gamma} & \frac{\gamma(1-\lambda-\gamma)^n+\lambda}{\lambda+\gamma}
    \end{bmatrix}
\end{equation}
and
\begin{equation}\label{CpP3}
    \mathbf{P}^n_3=(1-p)^n\mathbf{P}^n_1.
\end{equation}
The only numerical calculation necessary is $\mathbf{P}^n_2$.

Based on the state transitions shown in Fig. \ref{MCTH}, we can present
\begin{equation}\label{LP1}
    \sum_{D=1}^{D_m}(\mu_{D,0}+\mu_{D,1})=1,
\end{equation}
\begin{equation}\label{LP2}
    \mu_{1,0}=(1-\lambda)p\sum_{D=H_0}^{D_m}\mu_{D,0}+\gamma p\sum_{D=H_1}^{D_m}\mu_{D,1},
\end{equation}
\begin{equation}\label{CJ}
    J^{\pi(H_0;H_1)}=\sum_{D=1}^{D_m}D\mu_{D,1},
\end{equation}
and
\begin{equation}\label{CA}
    A^{\pi(H_0;H_1)}=\sum_{D=H_0}^{D_m}\mu_{D,0}+\sum_{D=H_1}^{D_m}\mu_{D,1}.
\end{equation}

Characterizing $\mu_{D,q}$ by expressions of $\mathbf{P}_1$-$\mathbf{P}_4$ makes \eqref{LP1} and \eqref{LP2} a system of linear equations in two variables $\mu_{1,0}$ and $\mu_{1,1}$.
Solving the linear system and substituting the solutions $\mu_{1,0}$ and $\mu_{1,1}$ into \eqref{CJ} and \eqref{CA}, we can obtain the values for the expected average QAoI and active time.
The more detailed expressions of Eqs. \eqref{LP1}-\eqref{CA} in different cases are given in Appendix \ref{Expression_Error_Channel}.

In view of Theorem \ref{CalWI} and Eqs. \eqref{LP1}-\eqref{CA}, Whittle indices of all states of a sub-MDP can be calculated.
We present the specific algorithm in Algorithm \ref{AlgErrorProne}, where 
\begin{equation}
    \bar{H}_0\triangleq \min\{H\!\in\!(H_0,D_m+1]|A^{\pi(H;H_1)}\neq A^{\pi(H_0;H_1)}\},
\end{equation}
\begin{equation}
    \bar{H}_1\triangleq \min\{H\!\in\!(H_1,D_m+1]|A^{\pi(H_0;H)}\neq A^{\pi(H_0;H_1)}\},
\end{equation}
and $\zeta[\pi_1,\pi_2]$ is calculated by following \eqref{LP1}, \eqref{LP2}, \eqref{CJ}, \eqref{CA}, and \eqref{zeta}.

\begin{algorithm}[t]
    \caption{Efficiently Calculating Whittle indices of all states of the considered sub-MDP with $p<1$}\label{AlgErrorProne}
    \textbf{Initialization}: $D_m$, $\lambda$, $\gamma$, $p$, $w\gets 0$, $\mathcal{R}_w\gets \emptyset$, and $H_0\gets 1$, $H_1\gets 1$\;
    \While{$\mathcal{R}_w\neq \mathcal{S}$}{
    \uIf{$H_1<H_0=D_m+1$}{
    $C_{w+1}\gets \zeta[\pi(H_0;\bar{H}_1),\pi(H_0;H_1)]$\;
    $\Gamma_{w+1}\gets\{\langle H_1+1,1 \rangle,\cdots,\langle\bar{H}_1,1\rangle\}$\;
    $H_1\gets \bar{H}_1$\;
    }
    \uElseIf{$H_0<H_1=D_m+1$}{
    $C_{w+1}\gets \zeta[\pi(\bar{H}_0;H_1),\pi(H_0;H_1)]$\;
    $\Gamma_{w+1}\!\gets\!\{\langle H_0\!+\!1,0 \rangle,\cdots,\langle\bar{H}_0,0\rangle\}$\;
    $H_0\gets \bar{H}_0$\;
    }
    \Else{
        \uIf{$\lambda+\gamma=1$}{
        $C_{w+1}\gets \zeta[\pi(\bar{H}_0;\bar{H}_1),\pi(H_0;H_1)]$\;
        $\Gamma_{w+1}\!\gets\!\{\langle H_0\!+\!1,0 \rangle,\langle H_1\!+\!1,1 \rangle,\cdots,\langle\bar{H}_0,0\rangle,\langle\bar{H}_1,1\rangle\}$\;
            $H_0\gets \bar{H}_0,H_1\gets \bar{H}_1$\;
        }
        \Else{
            $C_{w+1}\gets \min\{\zeta[\pi(\bar{H}_0;H_1),\pi(H_0;H_1)]$, $\zeta[\pi(H_0;\bar{H}_1);\pi(H_0;H_1)]\}$\;
            \uIf{
            $\zeta[\pi(\bar{H}_0;H_1),\pi(H_0;H_1)]\!<\!\zeta[\pi(H_0;\bar{H}_1),\pi(H_0;H_1)]$
            }
            {$\Gamma_{w+1}\gets\{\langle H_0+1,0 \rangle,\cdots,\langle\bar{H}_0,0\rangle\}$\;
            $H_0\gets \bar{H}_0$\;}
            \uElseIf{
            $\zeta[\pi(\bar{H}_0;H_1),\pi(H_0;H_1)]\!>\!\zeta[\pi(H_0;\bar{H}_1),\pi(H_0;H_1)]$
            }
            {
            $\Gamma_{w+1}\gets\{\langle H_1+1,1 \rangle,\cdots,\langle\bar{H}_1,1\rangle\}$\;
            $H_1\gets \bar{H}_1$\;
            }
            \Else{
            $\Gamma_{w+1}\!\gets\!\{\langle H_0\!+\!1,0 \rangle,\langle H_1\!+\!1,1 \rangle,\cdots,\langle\bar{H}_0,0\rangle,\langle\bar{H}_1,1\rangle\}$\;
            $H_0\gets \bar{H}_0,H_1\gets \bar{H}_1$\;
            }
        }
    }
    $W(s)\gets C_{w+1},\forall s\in \Gamma_{w+1}$\;
    $\mathcal{R}_{w+1}\gets \mathcal{R}_{w}\bigcup \Gamma_{w+1}$\;
    $w\gets w+1$\;
    }
\end{algorithm}

Following Eqs. \eqref{LP1}-\eqref{CA} and the structure of Algorithm \ref{AlgErrorProne}, the main proportion of computational complexity of Algorithm \ref{AlgErrorProne} lie in the calculations of $\mathbf{P}_d,\mathbf{P}^2_d,\cdots,\mathbf{P}^{D_m-2}_d$, for $d=1,2,3$, $\mathbf{P}_4$, $\bm{\mu}_1$, $J^{\pi}$, and $A^{\pi}$.
In particular, since the closed-from expressions given in \eqref{CpP1} and \eqref{CpP3}, calculating $\mathbf{P}^n_1$ and $\mathbf{P}^n_3$ requires $\mathcal{O}(\log_2n)$ calculations by the fast power method, calculating $\mathbf{P}^n_2$ requires $\mathcal{O}(8\log_2n)$ calculations by the matrix fast power method, and calculating $\mathbf{P}_4$ needs $\mathcal{O}(6\cdot 2^3)$.
Moreover, for each update of $H_0$ or $H_1$, Algorithm \ref{AlgErrorProne} requires $\mathcal{O}(2^3)$ calculations to calculate $\bm{\mu}_1$ by solving the linear equations \eqref{LP1} and \eqref{LP2}, and $\mathcal{O}(3D_m-H_0-H_1-2)$ to calculate $J^{\pi(H_0;H_1)}$ and $A^{\pi(H_0;H_1)}$.
Therefore, the computational complexity of Algorithm \ref{AlgErrorProne} is
\begin{equation}
\begin{split}
    &\mathcal{O}(2\log_2(D_m\!-\!1))+\mathcal{O}(8\log_2(D_m\!-\!1))+\mathcal{O}(6\cdot 2^3)\\
    &+(2D_m+2)\mathcal{O}(3D_m-H_0-H_1+2^3)=\mathcal{O}(D^2_m),
\end{split}
\end{equation}
which is more efficient than the complexity $\mathcal{O}(D_m^3)$ of \cite[Algorithm 1]{Akbarzadeh_Mahajan_2022} implemented in a matrix calculation method.

\subsection{Simplified Algorithm for Error-Free Networks}
\begin{figure*}[t]
	\centering
	\includegraphics[width=0.85\textwidth]{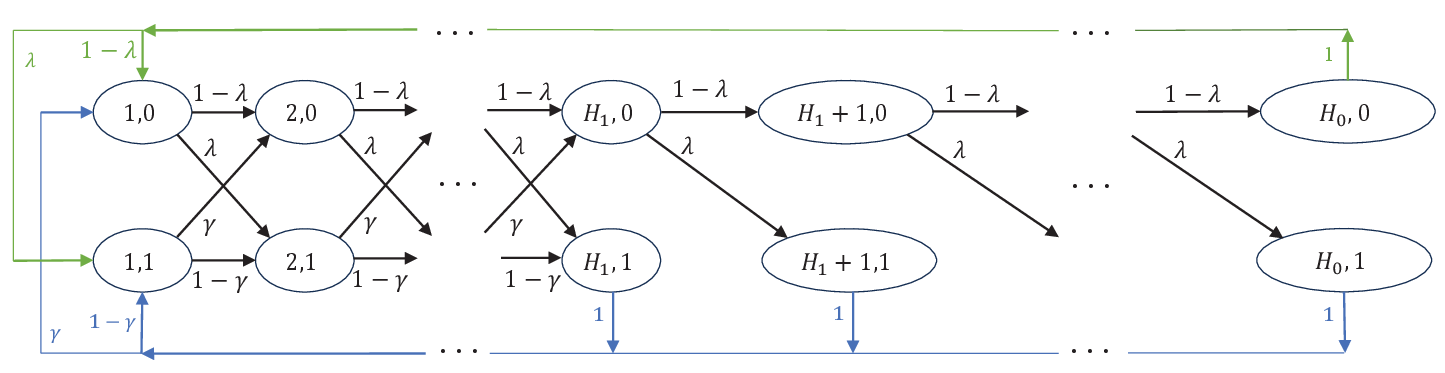}
	\caption{An example of a sub-MDP applying a thresh-type policy $\pi(H_0;H_1)$ with $H_1<H_0\le D_m$, $p=1$.}
 \label{MCTHnoError}
\end{figure*}

We now consider the error-free scenario as a special case. {The structure of a sub-MDP with an error-free channel (i.e., $p=1$), as illustrated in Fig. \ref{MCTHnoError},} is significantly simpler compared to the sub-MDP with an error-prone channel shown in Fig. \ref{MCTH}.

Specifically, in this context, $\mathbf{P}_3$ does not exist, and
\begin{equation}\label{P2noError}
    \mathbf{P}^n_2=
    \begin{cases}
        \begin{bmatrix}
            (1-\lambda)^n & 0\\
            \lambda(1-\lambda)^{n-1} & 0
        \end{bmatrix},& \text{if }H_0>H_1,\\
        \begin{bmatrix}
            0 & \gamma(1-\gamma)^{n-1}\\
            0 & (1-\gamma)^n
        \end{bmatrix},& \text{if }H_0<H_1,\\
        \mathbf{I},& \text{otherwise}.
    \end{cases}
\end{equation}
Furthermore, the states $\langle D,q \rangle$ are infeasible for $\hat{H}<D\le D_m$ due to $p=1$.
Consequently, \eqref{LP1}, \eqref{LP2}, \eqref{CJ}, and \eqref{CA} can be characterized by $\mu_{1,0}$ and $\mu_{1,1}$ in closed form.
The closed-form expressions of Eqs. \eqref{LP1}-\eqref{CA} in terms of $\mu_{1,0}$ and $\mu_{1,1}$ in different contexts are given in Appendix \ref{Expression_No_Error_Channel}.

In this sense, we can enhance the calculation efficiency of Algorithm \ref{AlgErrorProne} in this scenario by replacing the adopted analytical expressions of linear equations, $J^{\pi(H_0;H_1)}$ and $A^{\pi(H_0;H_1)}$ with the derived closed-form counterparts, and the initialization $p=1$.
In particular, the computational complexity of the enhanced algorithm in this context is
\begin{equation}
\begin{split}
(2D_m+2)\mathcal{O}(\log_2D_m+2^3)=\mathcal{O}(D_m\log_2D_m).
\end{split}
\end{equation}

\section{Numerical Results}
In this section, we first verify our theoretical analysis via simulation results.
Subsequently, the performance of Whittle index policy obtained by the proposed algorithm is assessed in comparison to two baseline policies: Greedy policy and Whittle index policy obtained by the approach introduced in the literature.  
We obtain each optimal policy by the standard DP approach, each group of Whittle indices by the proposed algorithms, and each performance curve in the following figure by averaging over 100000 Monte-Carlo simulation runs.

\subsection{Verification of the Analysis}

\begin{figure}[t]
	\centering
	\includegraphics[width=0.48\textwidth]{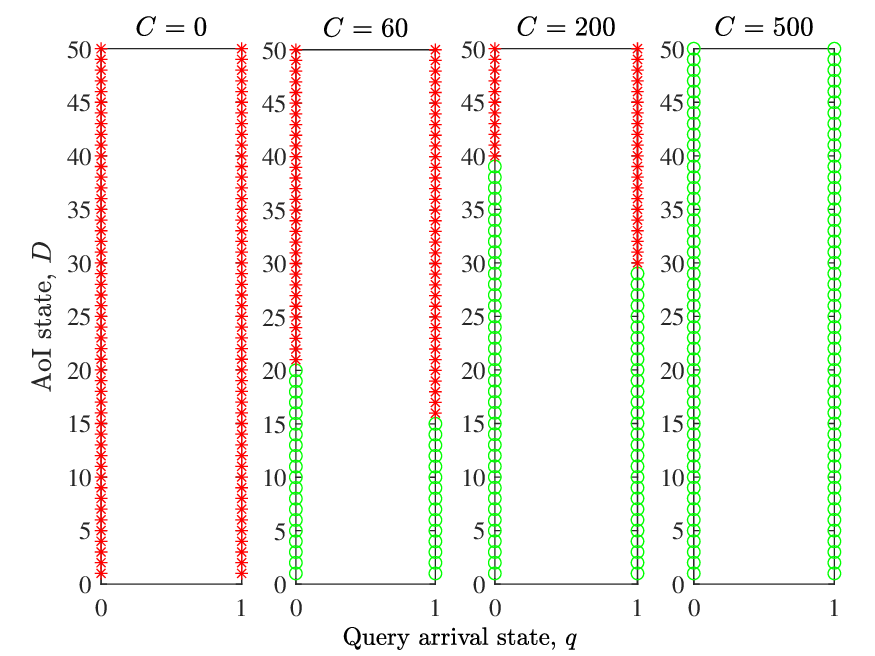}
	\caption{The optimal policy of a sub-MDP versus the scheduling cost $C$, where $\lambda=0.4$, $\gamma=0.3$, $p=0.7$, and $D_m=50$.}
 \label{pi_opt_la=0.4_ga=0.3}
\end{figure}

Fig. \ref{pi_opt_la=0.4_ga=0.3} and Fig. \ref{pi_opt_la=0.4_ga=0.6} plot the optimal policy $\pi^*$ of a sub-MDP corresponding to the considered network with the increasing scheduling cost $C$.
Green circles indicate the states where it is optimal for the BS to remain idle, while red Asterisks indicate the states where scheduling is optimal.

For Fig. \ref{pi_opt_la=0.4_ga=0.3}, we set $\lambda=0.4$, $\gamma=0.3$, $p=0.7$, and $D_m=50$.
We can observe from Fig. \ref{pi_opt_la=0.4_ga=0.3} that $\mathcal{P}(C)$, i.e., the set of all green spots, monotonically expands as $C$ increases.
Additionally, $\mathcal{P}(C)=\emptyset$ when $C=0$ and $\mathcal{P}(C)=\mathcal{S}$ when $C$ is significantly large.
These phenomenons satisfy Definition \ref{DWA}, validating Whittle indexability of the sub-MDP of the considered scheduling problem. 

\begin{figure}[t]
	\centering
	\includegraphics[width=0.48\textwidth]{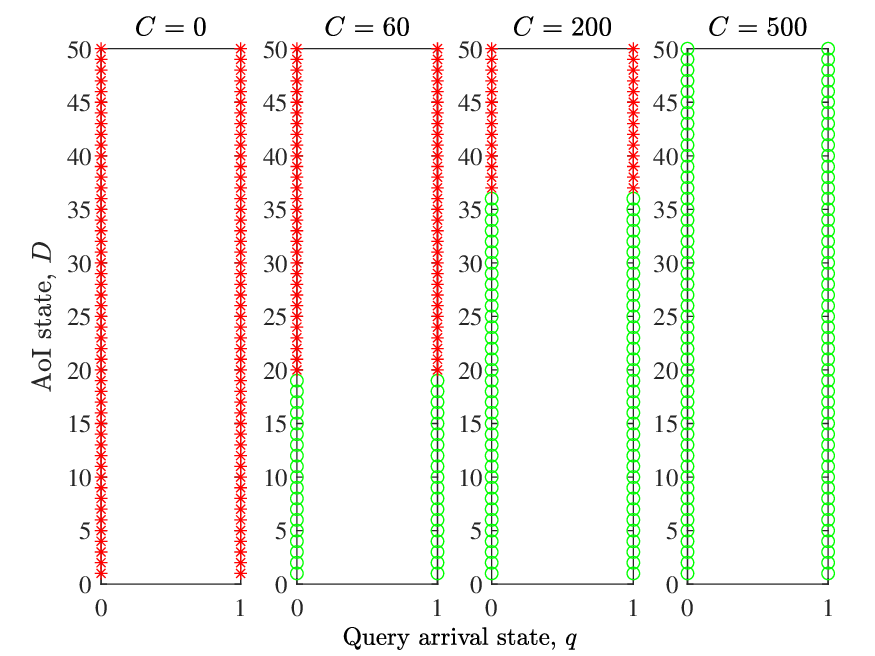}
	\caption{The optimal policy of a sub-MDP versus the scheduling cost $C$, where $\lambda=0.4$, $\gamma=0.6$, $p=0.7$, and $D_m=50$.}
 \label{pi_opt_la=0.4_ga=0.6}
\end{figure}

We set $\lambda=0.4$, $\gamma=0.6$, $p=0.7$, and $D_m=50$ for Fig. \ref{pi_opt_la=0.4_ga=0.6}, demonstrating the observations which are similar to those shown in Fig. \ref{pi_opt_la=0.4_ga=0.3}.
Furthermore, it is worth noticing that $H^*_0(C)$ is consistently equal to $H^*_1(C)$ as the increase of $C$.
This corresponds to Corollary \ref{EqualTh}, indicating that the optimal policy of the sub-MDP with $\lambda+\gamma=0.4+0.6=1$ always has identical $H^*_0(C)$ and $H^*_1(C)$.

\begin{figure}[t]
	\centering
	\includegraphics[width=0.48\textwidth]{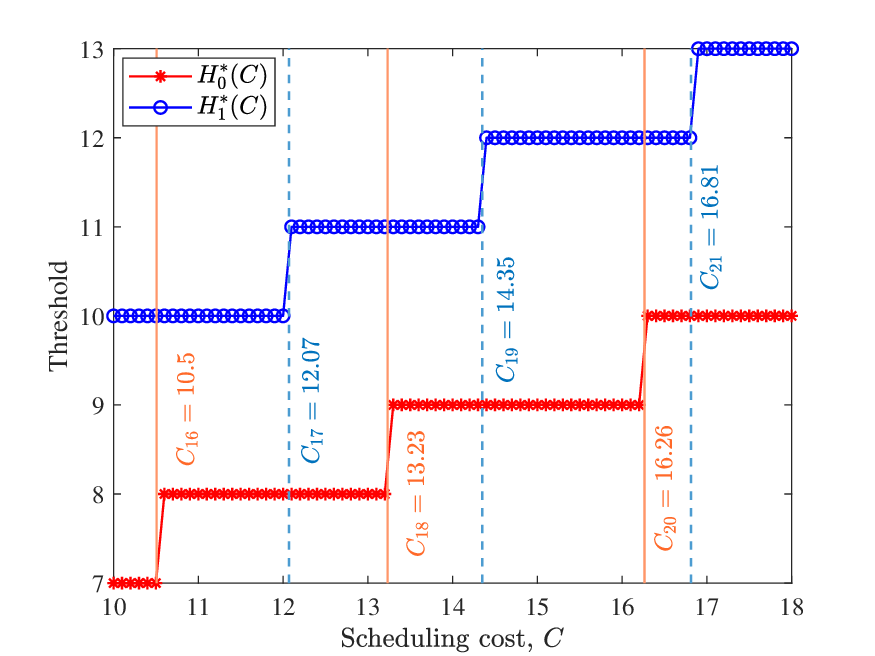}
	\caption{Thresholds versus the scheduling cost $C$, where $\lambda=0.5$, $\gamma=0.7$, $p=0.7$, and $D_m=50$.}
 \label{Threshold_v_C}
\end{figure}

In Fig. \ref{Threshold_v_C}, we depict the curves of the thresholds $H^*_0(C)$ and $H^*_1(C)$ of $\pi^*$ of a sub-MDP versus the scheduling cost $C$ with parameters given at the bottom of the figure.
We also calculate the values of distinct Whittle indices, denoted by $C_w$, of the sub-MDP within the range of $C$ shown in Fig. \ref{Threshold_v_C} and mark the locations of them in the figure.
In particular, according to the results of the proposed algorithm, $C_{16}$ is Whittle index of state $\langle 7,0 \rangle$, $C_{17}$ is Whittle index of state $\langle 10,1 \rangle$, $C_{18}$ is Whittle index of state $\langle 8,0 \rangle$, $C_{19}$ is the index of state $\langle 11,1 \rangle$, $C_{20}$ is the index of state $\langle 9,0 \rangle$, and $C_{21}$ is the index of state $\langle 12,1 \rangle$.
Recall that, built upon \eqref{TDWA}, Whittle index $C' = W(D,q)$ implies that, when the query state is $q$, $D$ and $D+1$ are equally optimal as thresholds for $\pi^*$ of the sub-MDP applying $C'$, which is consistent with the distribution of $C_w$ shown in Fig. \ref{Threshold_v_C}.
Therefore, Fig. \ref{Threshold_v_C} confirms the effectiveness of Algorithm~\ref{AlgErrorProne}.

\subsection{Comparisons with Baseline Policies}
We evaluate the performance of the proposed Whittle index policy by comparing it with baseline policies, shown below:

\begin{itemize}

    \item \textit{AoI-oriented Whittle index policy}: We introduce this policy from \cite{sun2022age}.
    This Whittle index policy is developed to optimize the long-term expected AoI of a network, where the network structure is the same as ours and the BS considers the latest received information from all nodes in all time slots.

    \item \textit{Greedy policy}: In each time slot, the BS adopting the greedy policy selects the scheduling action minimizing the expected immediate network-wide QAoI in the next time slot.
    Specifically, in each slot, the greedy policy ranks all nodes in descending order of their values of the term $D_i(t)p_i[\lambda_i(1-q_i(t))+(1-\gamma_i)q_i(t)]$, then schedules the first $M$ nodes from this sorted list.

\end{itemize}

\begin{figure}[t]
	\centering
	\includegraphics[width=0.48\textwidth]{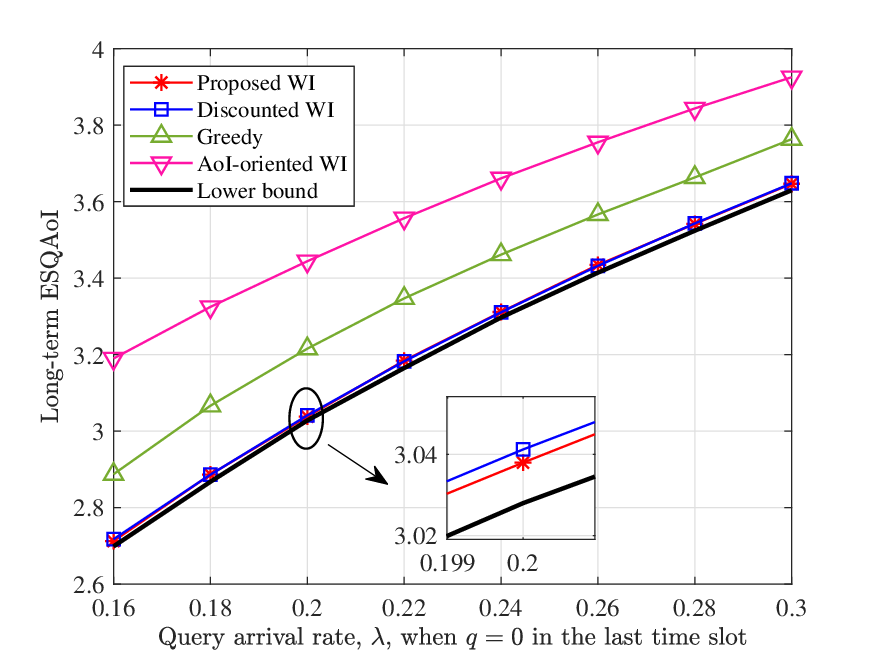}
	\caption{ESQAoI versus the query arrival rate if no query arriving in the last slot, where $\lambda_i=\lambda$, $p_i=0.8,\forall i$, $N=50$, $M=5$, and $D_m=50$.}
 \label{QAoI_v_la}
\end{figure}

We also introduce \textit{Whittle index policy obtained by \cite[Algorithm 1]{Akbarzadeh_Mahajan_2022}} and the lower bound of the ESQAoI performance achieved by solving problem \eqref{PR} to verify the effectiveness of the proposed policy. 
Recall that this Whittle index policy produced via \cite[Algorithm 1]{Akbarzadeh_Mahajan_2022} is adaptive to the sub-MDP under the expected discounted reward.
Furthermore, the computational complexity \cite[Algorithm 1]{Akbarzadeh_Mahajan_2022} is $\mathcal{O}(D_m^3)$, which is larger than Algorithm \ref{AlgErrorProne}. 
In the following, we call this type of strategy ``discounted Whittle index policy''. 
By Proposition \ref{PropVF}, we obtain the discounted Whittle index policy that can be approximately applied to our problem by setting $\beta$ close to $1$.

Fig. \ref{QAoI_v_la} shows the ESQAoI of the proposed, discounted, AoI-oriented Whittle index policies and the greedy policy are plotted against the query arrival rate when $q=0$ in the last time slot with parameters provided at the bottom of the figure.
The query arrival rate when $q=1$ in the last time slot of each node is uniformly set as one of $\{0.01,0.05,0.2,0.45,0.8\}$.
Further, we set $\beta=0.99$ to calculate the discounted Whittle indices.
As shown in Fig. \ref{QAoI_v_la}, all curves increase as $\lambda$ increases.
This is intuitive, as a large $\lambda$ increases the chance of each node being queried, causing the monitor to use the received packet in more time slots when the immediate AoI is high.
Additionally, Fig. \ref{QAoI_v_la} demonstrates that the performance of the discounted Whittle index policy is comparable to that of the proposed Whittle index policy.
This can be explained by Proposition \ref{PropVF}, implying that the sets of Whittle indices for $\mathcal{M}$ and $\mathcal{M}_{\beta}$ converge as $\beta$ is close to $1$.
The proposed policy is also shown to slightly outperform the discounted Whittle index policy, as the Whittle indices computed by the proposed algorithm are more accurate.
Across the range of $\lambda$, the consistently narrow gap between the proposed Whittle index policy and the lower bound highlights the asymptotical optimality of the proposed policy.
Moreover, we observe that the greedy policy performs worse compared to both the proposed and discounted Whittle index policies.
This occurs because the greedy policy solely guarantees optimality in the next time slot.
The AoI-oriented Whittle index policy exhibits the worst performance since this policy does not use the information of the query arrival states of the monitors.

\section{Conclusion}
In this paper, we studied the QAoI-oriented scheduling problem for a wireless multiuser uplink network with Markovian query arrivals at the base station (BS) side.
The decision-making problem was formulated as a Markov decision process (MDP), which is of prohibitively high complexity to be solved optimally by the standard method.
To address this issue, we applied Whittle index approach by initially relaxing the scheduling constraint of the original problem and decoupling the relaxed problem into multiple sub-MDPs.
We then proved that a sub-MDP is unichain, confirming that the convergence of the optimal policy under the expected discounted reward to that under the expected average reward.
Subsequently, we identified the threshold structure of the sub-MDP's optimal policy, which enabled the validation of Whittle indexability of the sub-MDP.
We then found the mathematical relationships between different Whittle indices of a sub-MDP by leveraging the convergence property of the sub-MDP. 
Drawn upon this, we put forth an efficient algorithm to compute the Whittle index of the formulated problem.
Subsequently, we leveraged the structural simplicity of the sub-MDP under error-free channels to devise an algorithm with further lower computational complexity for these scenarios.
Numerical results confirmed the accuracy of our theoretical analysis and demonstrated the superiority of the proposed Whittle index policy over the baseline policies.
Furthermore, the near-optimal performance of the proposed Whittle index policy was shown.

\bibliographystyle{IEEEtran}
\bibliography{QRef}

\appendices

\section{Proof of Proposition \ref{PropVF}}\label{Proof_PropVF}

Following \cite[Prop. 4.2.1, Prop. 4.2.6]{bertsekas2012dynamic}, we can prove Proposition \ref{PropVF} by showing that for any two states $s,s'\in\mathcal{S}$ of the sub-MDP, there exists a stationary deterministic policy $\pi_{s,s'}$ such that
\begin{equation}
    \Pr[s(\tau+1)=s'|s(1)=s,\pi_{s,s'}]>0
\end{equation}
for some $\tau\in\mathbb{Z}^+$.

Given an initial query arrival state $q(1)$, let $\mu_0(\tau+1)$ and $\mu_1(\tau+1)$ denote the probability of $q(\tau+1)=0$ and $q(\tau+1)=1$, i.e., the query arrival state equal to $0$ and that equal to $1$ after $\tau$ time slots.
Based on the Markov chain of $q$, its transition matrix is given by
\begin{equation}
    \mathbf{P}=
    \begin{bmatrix}
    1-\lambda & \lambda\\
    \gamma & 1-\gamma
    \end{bmatrix}.
\end{equation}
Clearly, we have $\Pr[q(\tau+1)=q'|q(1)=q]=\mathbf{P}^{\tau}_{q+1,q'+1}$.
By \cite{doi:10.1057/palgrave.jors.2600329}, $\mathbf{P}$ is diagonalizable and its two eigenvalues are $\alpha_1=1-\lambda-\gamma$ and $\alpha_2=1$.
Therefore, we can find an invertible matrix $\mathbf{Q}=[
\bm{\nu}_1 \: \bm{\nu}_2]$ such that $\mathbf{P}=\mathbf{Q}\mathbf{A}\mathbf{Q}^{-1}$, where $\bm{\nu}_1,\bm{\nu}_2$ are the eigenvectors corresponding to $\alpha_1$ and $\alpha_2$, respectively, and $\mathbf{A}=\text{diag}[\alpha_1,\alpha_2]$.

We now derive $\mathbf{Q}$ and $\mathbf{Q}^{-1}$.
Solving 
\begin{equation}
    (\mathbf{P}-\alpha_1\mathbf{I})\bm{\nu}_1=
    \begin{bmatrix}
        \gamma & \lambda\\
        \gamma & \lambda
    \end{bmatrix}\bm{\nu}_1=\mathbf{0},
\end{equation}
we can obtain that $\bm{\nu}_1$ can be $[-\frac{\lambda}{\gamma},1]^T$, and solving 
\begin{equation}
    (\mathbf{P}-\alpha_2\mathbf{I})\bm{\nu}_2=
    \begin{bmatrix}
        -\lambda & \gamma\\
        \lambda & -\gamma
    \end{bmatrix}\bm{\nu}_2=\mathbf{0},
\end{equation}
yields that $\bm{\nu}_2$ can be $[1,1]^T$.
Thus, we have 
\begin{equation}
    \mathbf{Q}=
    \begin{bmatrix}
         -\frac{\lambda}{\gamma}& 1\\
         1 & 1
    \end{bmatrix},
    \mathbf{Q}^{-1}=
    \begin{bmatrix}
        -\frac{\gamma}{\lambda+\gamma} & \frac{\gamma}{\lambda+\gamma}\\
        \frac{\gamma}{\lambda+\gamma} & \frac{\lambda}{\lambda+\gamma}
    \end{bmatrix}.
\end{equation}
After some manipulations, 
\begin{equation}
    \mathbf{P}^{\tau}=\mathbf{Q}\mathbf{A}^{\tau}\mathbf{Q}^{-1}=
    \begin{bmatrix}
\frac{\lambda(1-\lambda-\gamma)^{\tau}+\gamma}{\lambda+\gamma} & \frac{-\lambda(1-\lambda-\gamma)^{\tau}+\lambda}{\lambda+\gamma}\\
\frac{-\gamma(1-\lambda-\gamma)^{\tau}+\gamma}{\lambda+\gamma} & \frac{\gamma(1-\lambda-\gamma)^{\tau}+\lambda}{\lambda+\gamma}
\end{bmatrix}.
\end{equation}
Notice that all elements of $\mathbf{P}^{\tau}$ are non-zero for $\lambda,\gamma\in(0,1)$, indicating that $\Pr[q(\tau+1)=q'|q(1)=q]>0$ for any $q$, $q'$, and $\tau$.

Recall the evolution of AoI and query are independent of each other.
Therefore, when $s(1)=\langle D,q\rangle$ and $s(\tau+1)=\langle D',q'\rangle$ satisfy $D<D'\le D_m$, we can let $\pi_{s,s'}$ take passive action for all states, leading to
\begin{equation}
    \begin{split}
        &\Pr[s(\tau+1)=s'|s(1)=s,\pi_{s,s'}]\\
        &=\Pr[q(\tau+1)=q'|q(1)=q]>0
    \end{split}
\end{equation}
for $\tau=D'-D$.
On the other hand, when $D$ and $D'$ satisfy $D\ge D'$, we can let $\pi_{s,s'}$ take active action when the AoI state is not smaller than $D$, and take passive action otherwise, resulting in
\begin{equation}
    \begin{split}
        &\Pr[s(\tau+1)=s'|s(1)=s,\pi_{s,s'}]\\
        &=p(1-p)^{k-1}\Pr[q(\tau+1)=q'|q(1)=q]>0
    \end{split}
\end{equation}
for $\tau=k+D'-1$, $k\in\mathbb{Z}^+$.
This completes the proof.

\section{Proof of Lemma \ref{Vp}}\label{Proof_Vp}

We can prove Lemma \ref{Vp} by the induction.
    
    Recall $V_{\beta,1}(D,q)=0$ for all $s\in\mathcal{S}$, implying that $V_{\beta,1}(D,q)$ is non-decreasing with regard to $D$.
    Suppose $V_{n,\beta}(D,q)$ is non-decreasing with respect to $D$ given $q$, i.e.,
    \begin{equation}
        \frac{\partial^- V_{n,\beta}(D,q)}{\partial D}
        \ge 0, \forall q\in\{0,1\}.
    \end{equation}
    As such,
    \begin{equation}\label{PDD+1}
        \frac{\partial^- V_{n,\beta}(D+1,q)}{\partial D}=\frac{\partial^- V_{n,\beta}(D+1,q)}{\partial (D+1)}\ge 0,
    \end{equation}
    for all $q\in\{0,1\},D\in\{1,2,\cdots,D_m-1\}$.
    For $n+1$, by \eqref{Q00}, \eqref{Q01}, and \eqref{PDD+1}, we can easily show that $Q_{n+1,\beta}(D,0;0)$ and $Q_{n+1,\beta}(D,0;1)$ are non-decreasing with regard to $D$, thus $V_{n+1,\beta}(D,0)=\min\{Q_{n+1,\beta}(D,0;0),Q_{n+1}(D,0;1)\}$ is non-decreasing with respect to $D$.
    Similarly, $V_{n+1,\beta}(D,1)$ is also non-decreasing with respect to $D$.
    In light of this, we can conclude this non-decreasing property holds for $n+1$.
    
    Hence, $V_{n,\beta}(D,q)$ is an non-decreasing function of $D$ for all $n$, yielding that $V^*_{\beta}(D,q)=\lim_{n\to\infty}V_{n,\beta}(D,q)$ is non-decreasing with respect to $D$.

\section{Proof of Theorem \ref{ThP}}\label{Proof_ThP}

Consider $V^*_{\beta}(D,0)$ with arbitrary $D\in\{1,2,\cdots,D_m\}$.
    Assume that $\pi^*_{\beta}$ chooses to schedule for the state $s=\langle D,0\rangle$, i.e.,
    \begin{equation}\label{CompareD0}
        \begin{split}
            &Q^*_{\beta}(D,0;0)-Q^*_{\beta}(D,0;1)\\
            &=\beta\left[p(1-\lambda)V^*_{\beta}(\bar{D},0)+p\lambda V^*_{\beta}(\bar{D},1)\right.\\
            &\left.-p(1-\lambda)V^*_{\beta}(1,0)-p\lambda V^*_{\beta}(1,1)\right]-C\ge 0.
        \end{split}
    \end{equation}
    Then, based on Lemma \ref{Vp}, for any $D'\ge D$, 
    \begin{equation}
        \begin{split}
            &Q^*_{\beta}(D',0;0)-Q^*_{\beta}(D',0;1)\\
            &=\beta\left[p(1-\lambda)V^*_{\beta}(D'+1,0)+p\lambda V^*_{\beta}(D'+1,1)\right.\\
            &\left.-p(1-\lambda)V^*_{\beta}(1,0)-p\lambda V^*_{\beta}(1,1)\right]-C\\
            &\ge\beta\left[p(1-\lambda)V^*_{\beta}(\bar{D},0)+p\lambda V^*_{\beta}(\bar{D},1)\right.\\
            &\left.-p(1-\lambda)V^*_{\beta}(1,0)-p\lambda V^*_{\beta}(1,1)\right]-C\\
            &= Q^*_{\beta}(D,0;0)-Q^*_{\beta}(D,0;1)\ge 0,
        \end{split}
    \end{equation}
    suggesting that the optimal policy also selects to schedule for any states with ages larger than $D$ given $q=0$.
    Similarly, we can obtain that the optimal policy takes active action for any states with ages larger than $D$ given $q=1$ if $\pi^*_{\beta}(D,1)=1$, i.e, $Q^*_{\beta}(D,1;0)-Q^*_{\beta}(D,1;1)\ge 0$.

    Clearly, the above properties hold if and only if the optimal policy $\pi^*_{\beta}$ has the threshold structure.

\section{Proof of Corollary \ref{EqualTh}}\label{Proof_EqualTh}

For state $s=\langle D,0 \rangle$ with arbitrary $D\in\{1,2,\cdots,D_m\}$, recall \eqref{CompareD0}.
    Further, for $s=\langle D,1 \rangle$, we have
    \begin{equation}
    \begin{split}
        &Q^*_{\beta}(D,1;0)-Q^*_{\beta}(D,1;1)\\
        &=\beta p\left[\gamma V^*_{\beta}(\bar{D},0)+(1-\gamma) V^*_{\beta}(\bar{D},1)\right.\\
            &\left.-\gamma V^*_{\beta}(1,0)-(1-\gamma) V^*_{\beta}(1,1)\right]-C.
    \end{split}
    \end{equation}
    Obviously, when $\lambda+\gamma=1$, 
    \begin{equation}\label{compare}
        Q^*_{\beta}(D,0;0)-Q^*_{\beta}(D,0;1)=Q^*_{\beta}(D,1;0)-Q^*_{\beta}(D,1;1),
    \end{equation}
    for all $D$.

    Assume that $H^*_0(C)\neq H^*_1(C)$.
    Then, following the threshold structure, for 
    \begin{equation*}
        \min\{H^*_0(C),H^*_1(C)\}\le D<\max\{H^*_0(C),H^*_1(C)\},
    \end{equation*}
    we have $\pi^*_{\beta}(D,0)\neq \pi^*_{\beta}(D,1)$, i.e.,
    \begin{equation}
        Q^*_{\beta}(D,0;0)-Q^*_{\beta}(D,0;1)\neq Q^*_{\beta}(D,1;0)-Q^*_{\beta}(D,1;1),
    \end{equation}
    which is contradicts \eqref{compare}.
    Therefore, we must have $H^*_0(C)=H^*_1(C)$ for $\pi^*_{\beta}$ by Proposition \ref{PropVF}.

\section{Proof of Proposition \ref{VAT}}\label{Proof_VAT}

Define two MDPs, $\mathcal{M}_0$ and $\mathcal{M}_1$, with components identical to those of the sub-MDP under consideration.
    $\mathcal{M}_0$ and $\mathcal{M}_1$ have exactly the same initial state $s_0(1)=s_1(1)=\langle D,q\rangle$, activation cost $C$, and the state transition probability, $\lambda$, $\gamma$, and $p$.
    Furthermore, the two MDPs are under the same policy $\pi\in\Pi_h$ in $t\ge 2$.
    However, $\mathcal{M}_0$ and $\mathcal{M}_1$ have one difference, $a_0(1)=0$ while $a_1(1)=1$.
    Clearly, the expected discounted total active times of $\mathcal{M}_0$ and $\mathcal{M}_1$ are equivalent to $A^{\pi}_{\beta}(s;0)$ and $A^{\pi}_{\beta}(s;1)$, respectively.

    We then couple $\mathcal{M}_0$ and $\mathcal{M}_1$ together.
    Specifically, assume that for $t\ge 1$, the $j$th scheduling of $\mathcal{M}_0$ will have the same transmission event (success or failure) as that of the $j$th scheduling of $\mathcal{M}_1$.
    Moreover, the query arrivals of $\mathcal{M}_0$ and $\mathcal{M}_1$ are coupled as: the query arrival event of $\mathcal{M}_0$ in slot $t$ is the same as the query arrival event of $\mathcal{M}_1$ for all $t$.
    These two couplings do not change state transition probability since the evolution manners of $D$ and $q$ of both MDPs are unchanged.

    By \cite[Lemma 9, Lemma 10, Theorem 11]{zhoueasier}, the coupled MDPs $\mathcal{M}_0$, $\mathcal{M}_1$ with two-dimensional states, which have the same evolution on one state component, i.e., $q$-component, and the threshold structure in the other state component, i.e., $D$-component, satisfy that the expected discounted total time of $\mathcal{M}_1$ is not less than that of $\mathcal{M}_0$.
    Equivalently, the sub-MDP holds that $A^{\pi}_{\beta}(s;1)\ge A^{\pi}_{\beta}(s;0),\forall s\in\mathcal{S}$ and $\forall \pi\in\Pi_h$.

\section{Proof of Theorem \ref{CalWI}}\label{Proof_CalWI}

    Define $N_{\beta,w}$, $C_{\beta,w}$, $\mathcal{R}_{\beta,w}$, and $\Gamma_{\beta,w}$, which have the same definitions as $N_w$, $C_w$, $\mathcal{R}_w$, and $\Gamma_w$, but applied for $\mathcal{M}_{\beta}$.

    Besides, for $w\in\{0,1,\cdots,N_{\beta,w}-1\}$, state $y\in\mathcal{S}\backslash\mathcal{R}_{\beta,w}$, we define $\eta_{\beta,w}=\pi^{\mathcal{R}_{\beta,w}}$, $\eta_{\beta,(w,y)}=\pi^{\mathcal{R}_{\beta,w}\cup\{y\}}$ for brevity, and
    \begin{equation}
        \mathcal{C}_{\beta}(w,y)=\{s\in\mathcal{S}|A^{\eta_{\beta,w}}_{\beta}(s)\neq A^{\eta_{\beta,(w,y)}}_{\beta}(s)\},
    \end{equation}

    For $w\in\{0,1,\cdots,N_{\beta,w}-1\}$, given $C_{\beta,0}$ to $C_{\beta,w}$, \cite[Theorem 2]{Akbarzadeh_Mahajan_2022} shows that $\mathcal{M}_{\beta}$ has a property:

    There exists $\mathcal{Y}\subseteq\mathcal{S}\backslash\mathcal{R}_{\beta,w}$ such that $\mathcal{C}(w,y)$ is non-empty for $y\in\mathcal{Y}$, and exists a $C'_{\beta}>C_{\beta,w}$ such that

    \begin{equation}\label{Ceq}
    \begin{split}
        V^{\eta_{\beta,w}}_{\beta}(s)&\!=\!J^{\eta_{\beta,w}}_{\beta}(s)+C'_{\beta}A^{\eta_{\beta,w}}_{\beta}(s)\\
        &\!=\!J^{\eta_{\beta,(w,y)}}_{\beta}(s)\!+\!C'_{\beta}A^{\eta_{\beta,(w,y)}}_{\beta}(s)\!=\!V^{\eta_{\beta,(w,y)}}_{\beta}(s)\\
        &\le V^{\pi\neq \eta_{\beta,w},\eta_{\beta,(w,y)}}_{\beta}(s),
    \end{split}       
    \end{equation}
    and
    \begin{equation}\label{CO}
        C'_{\beta}=\min_{y\in\mathcal{Y}}\min_{s\in\mathcal{C}(w,y)}\frac{J^{\eta_{\beta,(w,y)}}_{\beta}(s)-J^{\eta_{\beta,w}}_{\beta}(s)}{A^{\eta_{\beta,w}}_{\beta}(s)-A^{\eta_{\beta,(w,y)}}_{\beta}(s)}.
    \end{equation}

    Notice that the property shown in \eqref{Ceq} and \eqref{CO} holds for all $\beta$, thereby
    also holds as $\beta\to 1$.
    Then, define $\eta_w=\pi^{\mathcal{R}_w}$, $\eta_{w,y}=\pi^{\mathcal{R}_w\cup\{y\}}$ for brevity, and
    \begin{equation}
        \mathcal{C}(w,y)=\{s\in\mathcal{S}|A^{\eta_w}\neq A^{\eta_{w,y}}\},
    \end{equation}
    for $\mathcal{M}$.
    For $w\in\{0,1,\cdots,N_W-1\}$, given $C_0$ to $C_w$, we have $\mathcal{C}(w,y)=\mathcal{S}$ since $A^{\pi}$ is unchanged with any initial state $s$.
    Consequently, there exists a $C'> C_w$ such that
    \begin{equation}\label{Ceq1}
    \begin{split}
        R^{\eta_w}&=\lim_{\beta\to 1}(1-\beta)[J^{\eta_{\beta,w}}_{\beta}(s)+C'A^{\eta_{\beta,w}}_{\beta}(s)]\\
        &=\lim_{\beta\to 1}(1-\beta)[J^{\eta_{\beta,(w,y)}}_{\beta}(s)+C'A^{\eta_{\beta,(w,y)}}_{\beta}(s)]\\
        &=R^{\eta_{w,y}}
        \le R^{\pi\neq \eta_w,\eta_{w,y}},\forall s\in\mathcal{S},
    \end{split}       
    \end{equation}
    and
    \begin{equation}\label{CO1}
    \begin{split}
        C'&=\min_{y\in\mathcal{S}\backslash\mathcal{R}_w}\lim_{\beta\to 1}(1-\beta)\frac{J^{\eta_{\beta,(w,y)}}_{\beta}(s)-J^{\eta_{\beta,w}}_{\beta}(s)}{A^{\eta_{\beta,w}}_{\beta}(s)-A^{\eta_{\beta,(w,y)}}_{\beta}(s)}\\
        &=\min_{y\in\mathcal{S}\backslash\mathcal{R}_w}\frac{J^{\eta_{w,y}}-J^{\eta_w}}{A^{\eta_w}-A^{\eta_{w,y}}}=\min_{y\in\mathcal{S}\backslash\mathcal{R}_w}\zeta[\eta_{w,y},\eta_w].
    \end{split}
    \end{equation}

    Eq. \eqref{Ceq1} reveals that both $\eta_w$ and $\eta_{w,y}$ are the optimal policies, indicating that $C'$ satisfies Definition \ref{DWA}, and thus is Whittle index of state $y$.
    Thus, all $y$ that can attain the minimization of \eqref{CO1} constitute $\Gamma_{w+1}$.
    By Definition \ref{DWA}, we can easily conclude that $\eta_w$ is an optimal policy if and only if $C_w\le C \le C_{w+1}$, which means $C_{w+1}=C'$ and $\eta_w=\pi^{\mathcal{R}_w}\in\Pi_h$. 
    In light of this, we can obtain \eqref{Cnexteq}.

    This completes the proof.

\section{Proof of Proposition \ref{TransMatrix}}\label{Proof_TransMatrix}

By the state transitions in Fig. \ref{MCTH}, we can readily obtain \eqref{P1}, \eqref{P2}, and \eqref{P3}.

    \begin{figure}[t]
    	\centering
    	\includegraphics[width=0.48\textwidth]{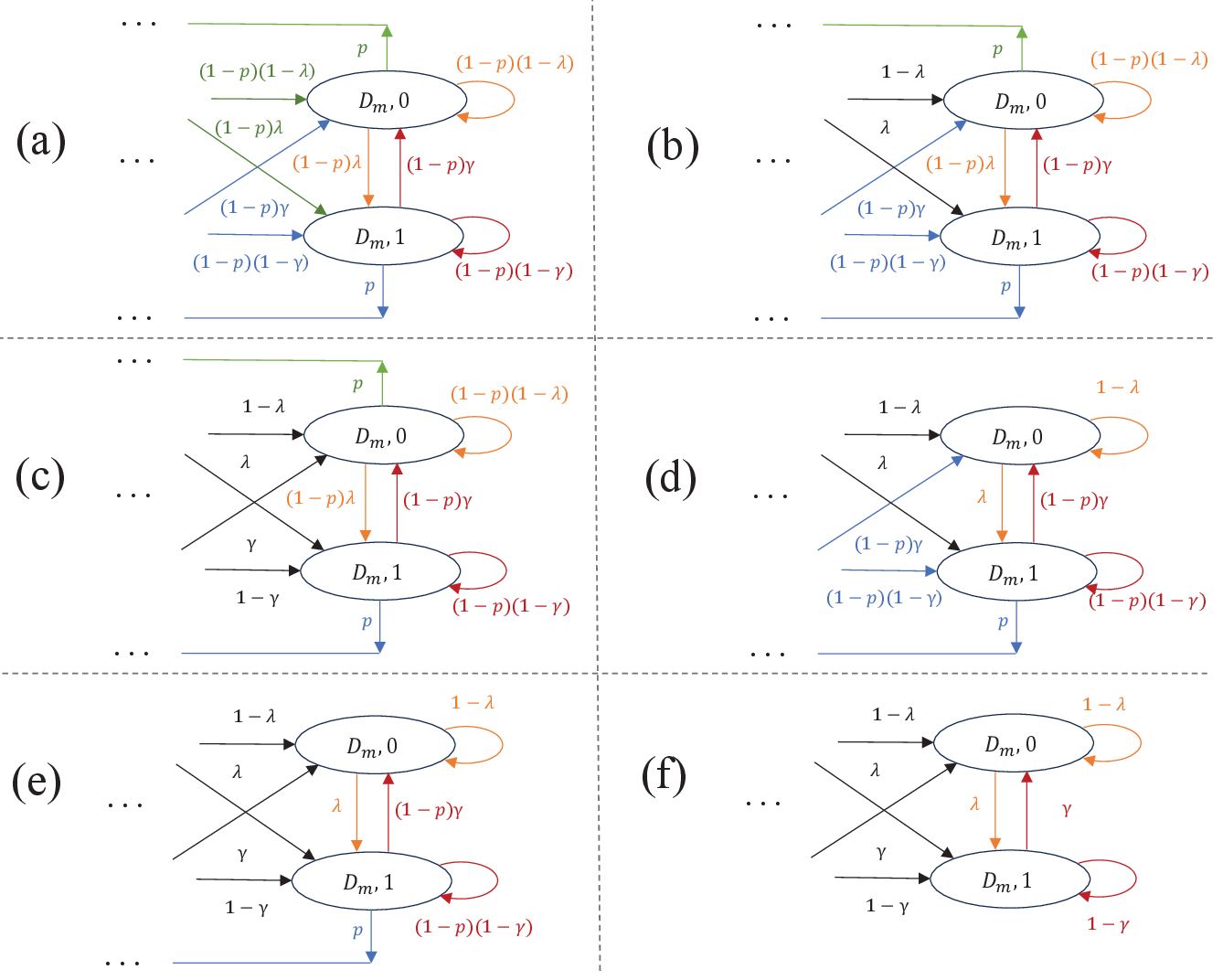}
    	\caption{An example of 6 different cases of state transition from $\bm{\mu}_{D_m-1}$ to  $\bm{\mu}_{D_m}$ of a sub-MDP applying a policy $\pi(H_0;H_1)$ with $H_0\ge H_1$.}
     \label{MCTHDmP}
    \end{figure}
    
    To investigate $\mathbf{P}_4$, we need to consider 6 different cases of state transition from $\bm{\mu}_{D_m-1}$ to $\bm{\mu}_{D_m}$ of the sub-MDP with $H_0>H_1$, which is illustrated in Fig. \ref{MCTHDmP}.

    The state transitions in Subplot (a) of Fig. \ref{MCTHDmP} shows the case when $H_1\le H_0<D_m$.
    We can observe that
    \begin{equation}\label{Dma}
        \bm{\mu}_{D_m}=\mathbf{P}_3(\bm{\mu}_{D_m-1}+\bm{\mu}_{D_m}).
    \end{equation}
    By the state transitions in Subplot (b) of Fig. \ref{MCTHDmP} depicting the case that $H_1< H_0=D_m$, we can conclude
    \begin{equation}\label{Dmb}
        \bm{\mu}_{D_m}=\mathbf{P}_2\bm{\mu}_{D_m-1}+\mathbf{P}_3\bm{\mu}_{D_m}.
    \end{equation}
    By the state transitions in Subplot (c) of Fig. \ref{MCTHDmP} showing the case that $H_0=H_1=D_m$, we have
    \begin{equation}\label{Dmc}
        \bm{\mu}_{D_m}=\mathbf{P}_1\bm{\mu}_{D_m-1}+\mathbf{P}_3\bm{\mu}_{D_m}.
    \end{equation}
    By the state transitions in Subplot (d) of Fig. \ref{MCTHDmP} showing the case that $H_1<D_m,H_0=D_m+1$, we can write
    \begin{equation}\label{Dmd}
        \bm{\mu}_{D_m}=\mathbf{P}_2(\bm{\mu}_{D_m-1}+\bm{\mu}_{D_m}).
    \end{equation}
    By the state transitions in Subplot (e) of Fig. \ref{MCTHDmP} plotting the case that $H_1=D_m,H_0=D_m+1$, we state
    \begin{equation}\label{Dme}
        \bm{\mu}_{D_m}=\mathbf{P}_1\bm{\mu}_{D_m-1}+\mathbf{P}_2\bm{\mu}_{D_m}.
    \end{equation}
    By the state transitions in Subplot (f) of Fig. \ref{MCTHDmP} depicting the case that $H_0=H_1=D_m+1$, we have
    \begin{equation}\label{Dmf}
        \bm{\mu}_{D_m}=\mathbf{P}_1(\bm{\mu}_{D_m-1}+\bm{\mu}_{D_m}).
    \end{equation}

    Eqs. \eqref{Dma}-\eqref{Dmf} remain valid when $H_0 \le H_1$. After performing some manipulations on \eqref{Dma} to \eqref{Dmf}, we can derive \eqref{P4}.
    Note that $\mathbf{P}_1$ is a stochastic matrix and $\mathbf{P}_3$ is a sub-stochastic matrix, thus they have spectral radiuses less than $1$.
    Moreover, $\det(\mathbf{I}-\mathbf{P}_2)=\lambda p$ when $H_0>H_1$ and $\det(\mathbf{I}-\mathbf{P}_2)=\gamma p$ when $H_0<H_1$.
    Accordingly, $\mathbf{I}-\mathbf{P}_d$ is invertible for $d=1,2,3$ in the cases considered in \eqref{P4}, and thus \eqref{P4} is feasible.

\section{Useful Expressions for Implementing Algorithm \ref{AlgErrorProne} When Channel is Error-prone}\label{Expression_Error_Channel}
Considering a sub-MDP applying policy $\pi(H_0;H_1)$ with $\lambda$, $\gamma$, $D_m$, and $p<1$.
By the structure of the sub-MDP shown in Fig. \ref{MCTH}, we have
\begin{equation}
    \mu_{D,0}\!=\!\frac{(1\!-\!\lambda\!-\!\gamma)^{H\!-\!1}(\lambda\mu_{1,0}\!-\!\gamma\mu_{1,1})\!+\!\gamma(\mu_{1,0}\!+\!\mu_{1,1})}{\lambda\!+\!\gamma},
\end{equation}
\begin{equation}
    \mu_{D,1}\!=\!\frac{(1\!-\!\lambda\!-\!\gamma)^{H\!-\!1}\!(-\lambda\mu_{1,0}\!+\!\gamma\mu_{1,1})\!+\!\lambda(\mu_{1,0}\!+\!\mu_{1,1})}{\lambda+\gamma},
\end{equation}
for $D\le H$.
As such, $\mu_{D,0}+\mu_{D,1}=\mu_{1,0}+\mu_{1,1}$ for $D\le H$.

In addition, we use inequalities involving $H_0$ and $H_1$ to represent different cases for expressions of the linear equations, $J^{\pi(H_0;H_1)}$, and $A^{\pi(H_0;H_1)}$. 
Consider two inequalities of two cases, they have the same structure but the positions of $H_0$ and $H_1$ are swapped.
We call these two cases symmetrical to each other. 

By $\mu_{D,0}+\mu_{D,1}=\mu_{1,0}+\mu_{1,1}$ for $D\le H$, after some mathematical manipulations, we can express the corresponding linear equations, $J^{\pi(H_0;H_1)}$, and $A^{\pi(H_0;H_1)}$ in different cases as follows.

Consider $H_1\le H_0$.
For $H_1\le H_0<D_m$, the linear equations are
\begin{equation}
\begin{split}
1&=H(\mu_{1,0}+\mu_{1,1})\\
&+\left\|\left(\sum_{i=1}^{\hat{H}-H}\mathbf{P}_2^i\mathbf{P}_{1}^{H-1}+\sum_{i=1}^{D_m-\hat{H}-1}\mathbf{P}_3^i\mathbf{P}_2^{\hat{H}-H}\mathbf{P}_1^{H-1}\right.\right.\\
&\left.\left.+\mathbf{P}_4\mathbf{P}_3^{D_m-\hat{H}-1}\mathbf{P}_2^{\hat{H}-H}\mathbf{P}_1^{H-1}\right)\bm{\mu}_1 \right\|_1,
\end{split}
\end{equation}
\begin{equation}
\begin{split}
    \mu_{1,0}&\!=\!(1-\lambda)p\left(\sum_{i=0}^{D_m-\hat{H}-1}\mathbf{P}_3^i\mathbf{P}_2^{\hat{H}-H}\mathbf{P}_1^{H-1}\right.\\
    &\!\left.+\!\mathbf{P}_4\mathbf{P}_3^{D_m-\hat{H}-1}\mathbf{P}_2^{\hat{H}-H}\mathbf{P}_1^{H-1}\right)_{1\cdot}\bm{\mu}_1\\
    &\!+\!\gamma p\!\left(\sum_{i=0}^{\hat{H}-H}\mathbf{P}_2^i\mathbf{P}_{1}^{H-1}\!+\!\sum_{i=1}^{D_m-\hat{H}-1}\mathbf{P}_3^i\mathbf{P}_2^{\hat{H}-H}\mathbf{P}_1^{H-1}\right.\\
    &\left.+\mathbf{P}_4\mathbf{P}_3^{D_m-\hat{H}-1}\mathbf{P}_2^{\hat{H}-H}\mathbf{P}_1^{H-1}\right)_{2\cdot}\bm{\mu}_1,
\end{split}
\end{equation}
and $J^{\pi(H_0;H_1)}$, $A^{\pi(H_0;H_1)}$ are given by
\begin{equation}
    \begin{split}
        J^{\pi(H_0;H_1)}&= \left[\sum_{i=0}^{H-1}(i+1)\mathbf{P}_{1}^i + \sum_{i=1}^{\hat{H}-H}(H+i)\mathbf{P}_2^i\mathbf{P}_{1}^{H-1}\right.\\
        &+\sum_{i=1}^{D_m-\hat{H}-1}(\hat{H}+i)\mathbf{P}_3^i\mathbf{P}_2^{\hat{H}-H}\mathbf{P}_1^{H-1}\\
        &\left.+D_m\mathbf{P}_4\mathbf{P}_3^{D_m-\hat{H}-1}\mathbf{P}_2^{\hat{H}-H}\mathbf{P}_1^{H-1}\right]_{2\cdot}\bm{\mu}_1,
    \end{split}
\end{equation}
\begin{equation}
    \begin{split}
        A^{\pi(H_0;H_1)}&=\left\|\left(\sum_{i=0}^{D_m-\hat{H}-1}\mathbf{P}_3^i\mathbf{P}_2^{\hat{H}-H}\mathbf{P}_1^{H-1}\right.\right.\\
        &\left.\left.+\mathbf{P}_4\mathbf{P}_3^{D_m-\hat{H}-1}\mathbf{P}_2^{\hat{H}-H}\mathbf{P}_1^{H-1}\right)\bm{\mu}_1\right\|_1\\
    &+\left(\sum_{i=0}^{\hat{H}-H}\mathbf{P}_2^i\mathbf{P}_{1}^{H-1}\right)_{2\cdot}\bm{\mu}_1.
    \end{split}
\end{equation}
For $H_1< H_0=D_m$, we have
\begin{equation}\label{lpSv2}
\begin{split}
1&=H(\mu_{1,0}+\mu_{1,1})\\
&\!+\!\left\|\left(\sum_{i=1}^{D_m\!-\!H\!-\!1}\mathbf{P}_2^i\mathbf{P}_{1}^{H-1}\!+\!\mathbf{P}_4\mathbf{P}_2^{D_m-H-1}\mathbf{P}_1^{H-1}\right)\bm{\mu}_1\right\|_1,
\end{split}
\end{equation}
\begin{equation}
\begin{split}
    &\mu_{1,0}=(1-\lambda)p\left(\mathbf{P}_4\mathbf{P}_2^{D_m-H-1}\mathbf{P}_1^{H-1}\right)_{1\cdot}\bm{\mu}_1\\
    &\!+\!\gamma p\!\left(\sum_{i=0}^{D_m\!-\!H\!-\!1}\mathbf{P}_2^i\mathbf{P}_{1}^{H-1}\!+\!\mathbf{P}_4\mathbf{P}_2^{D_m\!-\!H\!-\!1}\mathbf{P}_1^{H-1}\right)_{2\cdot}\bm{\mu}_1,
\end{split}
\end{equation}
and $J^{\pi(H_0;H_1)}$, $A^{\pi(H_0;H_1)}$ are given by
\begin{equation}\label{Jv2}
    \begin{split}
        J^{\pi(H_0;H_1)}&\!= \!\left[\sum_{i=0}^{H-1}(i+1)\mathbf{P}_{1}^i 
        \!+\!\sum_{i=1}^{D_m\!-\!H\!-\!1}(H\!+\!i)\mathbf{P}_2^i\mathbf{P}_{1}^{H-1}\right.\\
        &\left.+D_m\mathbf{P}_4\mathbf{P}_2^{D_m-H-1}\mathbf{P}_1^{H-1}\right]_{2\cdot}\bm{\mu}_1,
    \end{split}
\end{equation}
\begin{equation}
    \begin{split}
        A^{\pi(H_0;H_1)}&=\left\|\left(\mathbf{P}_4\mathbf{P}_2^{D_m-H-1}\mathbf{P}_1^{H-1}\right)\bm{\mu}_1\right\|_1\\
    &+\left(\sum_{i=0}^{D_m-H-1}\mathbf{P}_2^i\mathbf{P}_{1}^{H-1}\right)_{2\cdot}\bm{\mu}_1.
    \end{split}
\end{equation}
For $H_0=H_1=D_m$, we have
\begin{equation}\label{lpSv3}
\begin{split}
1=&(H-1)(\mu_{1,0}+\mu_{1,1})
+\left\|\left(\mathbf{P}_4\mathbf{P}_1^{D_m-2}\right)\bm{\mu}_1\right\|_1
\end{split}
\end{equation}
and
\begin{equation}
\begin{split}
    &\mu_{1,0}\!=\!(1\!-\!\lambda)p\left(\mathbf{P}_4\mathbf{P}_1^{D_m\!-\!2}\right)_{1\cdot}\!\bm{\mu}_1
    \!+\!\gamma p\left(\mathbf{P}_4\mathbf{P}_1^{D_m\!-\!2}\right)_{2\cdot}\!\bm{\mu}_1,
\end{split}
\end{equation}
and $J^{\pi(H_0;H_1)}$, $A^{\pi(H_0;H_1)}$ are given by
\begin{equation}\label{Jv3}
    \begin{split}
        J^{\pi(H_0;H_1)}&
       \! =\! \left[\sum_{i=0}^{D_m-2}(i+1)\mathbf{P}_{1}^i\! + \!D_m\mathbf{P}_4\mathbf{P}_1^{D_m-2}\right]_{2\cdot}\bm{\mu}_1,
    \end{split}
\end{equation}
\begin{equation}
    \begin{split}
        A^{\pi(H_0;H_1)}&=\left\|\left(\mathbf{P}_4\mathbf{P}_1^{D_m-2}\right)\bm{\mu}_1\right\|_1.
    \end{split}
\end{equation}
For $H_1 < D_m$ and $H_0 = D_m + 1$, the first linear equation is identical to \eqref{lpSv2}, while the second is given by
\begin{equation}
\begin{split}
    &\mu_{1,0}\!=\!\gamma p\left(\sum_{i=0}^{D_m\!-\!H\!-\!1}\mathbf{P}_2^i\mathbf{P}_{1}^{H-1}\!+\!\mathbf{P}_4\mathbf{P}_2^{D_m\!-\!H\!-\!1}\mathbf{P}_1^{H\!-\!1}\right)_{2\cdot}\!\bm{\mu}_1.
\end{split}
\end{equation}
$J^{\pi(H_0;H_1)}$ is identical to \eqref{Jv2}, while $A^{\pi(H_0;H_1)}=\mu_{1,0}/(\gamma p)$.
For $H_1=D_m$ and $H_0=D_m+1$, the first linear equation is identical to \eqref{lpSv3}, while the second one is presented as
\begin{equation}
\begin{split}
    &\mu_{1,0}=\gamma p\left(\mathbf{P}_4\mathbf{P}_1^{D_m-2}\right)_{2\cdot}\bm{\mu}_1,
\end{split}
\end{equation}
and $J^{\pi(H_0;H_1)}$ is identical to \eqref{Jv3}, while $A^{\pi(H_0;H_1)}=\mu_{1,0}/(\gamma p)$.
For $H_0=H_1=D_m+1$, we can easily obtain
\begin{equation}
    \begin{split}
        J^{\pi(H_0;H_1)}&=\frac{D_m\lambda}{\lambda+\gamma},
    \end{split}
\end{equation}
and $A^{\pi(H_0;H_1)}=0$.

On the other hand, we can easily conclude that all cases holding $H_0<H_1$ have the expressions of the first linear equation and $J^{\pi(H_0;H_1)}$ identical to that in the corresponding symmetrical cases.

Besides, in particular, for $H_0\le H_1<D_m$, we have
\begin{equation}
\begin{split}
    \mu_{1,0}&\!=\!(1\!-\!\lambda)p\!\left(\sum_{i=0}^{\hat{H}\!-\!H}\mathbf{P}_2^i\mathbf{P}_{1}^{H-1}\!+\!\sum_{i=1}^{D_m\!-\!\hat{H}\!-\!1}\mathbf{P}_3^i\mathbf{P}_2^{\hat{H}\!-\!H}\mathbf{P}_1^{H\!-\!1}\right.\\
    &\left.+\mathbf{P}_4\mathbf{P}_3^{D_m-\hat{H}-1}\mathbf{P}_2^{\hat{H}-H}\mathbf{P}_1^{H-1}\right)_{1\cdot}\bm{\mu}_1\\
    &+\gamma p\left(\sum_{i=0}^{D_m-\hat{H}-1}\mathbf{P}_3^i\mathbf{P}_2^{\hat{H}-H}\mathbf{P}_1^{H-1}\right.\\
    &\left.+\mathbf{P}_4\mathbf{P}_3^{D_m-\hat{H}-1}\mathbf{P}_2^{\hat{H}-H}\mathbf{P}_1^{H-1}\right)_{2\cdot}\bm{\mu}_1,
\end{split}
\end{equation}
\begin{equation}
    \begin{split}
        A^{\pi(H_0;H_1)}&=\left(\sum_{i=0}^{\hat{H}-H}\mathbf{P}_2^i\mathbf{P}_{1}^{H-1}\right)_{1\cdot}\bm{\mu}_1\\
    &+\left\|\left(\sum_{i=0}^{D_m-\hat{H}-1}\mathbf{P}_3^i\mathbf{P}_2^{\hat{H}-H}\mathbf{P}_1^{H-1}\right.\right.\\
    &\left.\left.+\mathbf{P}_4\mathbf{P}_3^{D_m-\hat{H}-1}\mathbf{P}_2^{\hat{H}-H}\mathbf{P}_1^{H-1}\right)\bm{\mu}_1\right\|_1.
    \end{split}
\end{equation}
For $H_0< H_1=D_m$, we can write
\begin{equation}
\begin{split}
    \mu_{1,0}&=(1-\lambda)p\left(\sum_{i=0}^{D_m-H-1}\mathbf{P}_2^i\mathbf{P}_{1}^{H-1}\right.\\
    &\left.+\mathbf{P}_4\mathbf{P}_2^{D_m-H-1}\mathbf{P}_1^{H-1}\right)_{1\cdot}\bm{\mu}_1\\
    &+\gamma p\left(\mathbf{P}_4\mathbf{P}_2^{D_m-H-1}\mathbf{P}_1^{H-1}\right)_{2\cdot}\bm{\mu}_1,
\end{split}
\end{equation}
\begin{equation}
    \begin{split}
        A^{\pi(H_0;H_1)}&=\left(\sum_{i=0}^{D_m-H-1}\mathbf{P}_2^i\mathbf{P}_{1}^{H-1}\right)_{1\cdot}\bm{\mu}_1\\
        &+\left\|\left(\mathbf{P}_4\mathbf{P}_2^{D_m-H-1}\mathbf{P}_1^{H-1}\right)\bm{\mu}_1\right\|_1.
    \end{split}
\end{equation}
For $H_0<D_m$ with $H_1=D_m+1$, we have
\begin{equation}
\begin{split}
    &\mu_{1,0}=(1-\lambda) p\left(\sum_{i=0}^{D_m-H-1}\mathbf{P}_2^i\mathbf{P}_{1}^{H-1}\right.\\
    &\left.+\mathbf{P}_4\mathbf{P}_2^{D_m-H-1}\mathbf{P}_1^{H-1}\right)_{1\cdot}\bm{\mu}_1,
\end{split}
\end{equation}
and $A^{\pi(H_0;H_1)}=\frac{\mu_{1,0}}{(1-\lambda) p}$.
For $H_0=D_m$ with $H_1=D_m+1$, we have
\begin{equation}
\begin{split}
    &\mu_{1,0}=(1-\lambda) p\left(\mathbf{P}_4\mathbf{P}_1^{D_m-2}\right)_{1\cdot}\bm{\mu}_1,
\end{split}
\end{equation}
and $A^{\pi(H_0;H_1)}=\frac{\mu_{1,0}}{(1-\lambda) p}$.

\section{Useful Expressions for Implementing Algorithm 2 When Channel is Error-free}\label{Expression_No_Error_Channel}
Consider a sub-MDP applying policy $\pi(H_0;H_1)$ with $\lambda$, $\gamma$, $D_m$, and $p=1$.
By the structure of the sub-MDP shown in Fig. \ref{MCTHnoError}, we have the following expressions.

Consider $H_1\le H_0$.
We can derive that $A^{\pi(H_0,H_1)}=\mu_{1,0}+\mu_{1,1}$.
Let $\Phi=\frac{(1-\lambda-\gamma)^{H-1}}{\lambda+\gamma}$.
For $H_1\le H_0\le D_m$, the linear equations can be expressed as
\begin{equation}\label{LP1CF}
    \begin{split}
        &\left[H+\frac{1-(1-\lambda)^{\hat{H}-H}}{\lambda}\left(\Phi\lambda+\frac{\gamma}{\lambda+\gamma}\right)\right]\mu_{1,0}\\
        &+\left[H+\frac{1-(1-\lambda)^{\hat{H}-H}}{\lambda}\left(-\Phi\gamma+\frac{\gamma}{\lambda+\gamma}\right)\right]\mu_{1,1}=1,
    \end{split}
\end{equation}
\begin{equation}\label{LP2CF}
    \begin{split}
        \mu_{1,0}&=(1-\lambda-\gamma)(1-\lambda)^{\hat{H}-H}\left(\Phi\lambda+\frac{\gamma}{\lambda+\gamma}\right)\mu_{1,0}\\
    &\!+\!(1\!-\!\lambda\!-\!\gamma)(1\!-\!\lambda)^{\hat{H}-H}\left(-\Phi\gamma\!+\!\frac{\gamma}{\lambda+\gamma}\right)\mu_{1,1}\\
    &+\gamma(\mu_{1,0}+\mu_{1,1}),
    \end{split}
\end{equation}
and 
\begin{equation}
    \begin{split}
        J^{\pi(H_0,H_1)}&\!=\!\left[\frac{1\!-\!(1\!-\!\lambda\!-\!\gamma)^{H+1}}{(\lambda+\gamma)^2}\!-\!\frac{(H+1)(1\!-\!\lambda\!-\!\gamma)^{H}}{\lambda\!+\!\gamma}\right]\\
        &\times\frac{-\lambda\mu_{1,0}+\gamma\mu_{1,1}}{\lambda+\gamma}
        +\frac{H(H+1)}{2}\frac{\lambda(\mu_{1,0}+\mu_{1,1})}{\lambda+\gamma}\\
        &+\left[H\!-\!(\hat{H}\!+\!1)(1\!-\!\lambda)^{\hat{H}\!-\!H}\!+\!\frac{1\!-\!(1\!-\!\lambda)^{\hat{H}\!-\!H\!+\!1}}{\lambda}\right]\\
        &\times \!\frac{(1\!-\!\lambda\!-\!\gamma)^{H\!-\!1}(\lambda\mu_{1,0}\!-\!\gamma\mu_{1,1})\!+\!\gamma(\mu_{1,0}\!+\!\mu_{1,1})}{\lambda+\gamma}.
    \end{split}
\end{equation}
For $H_1\le D_m$ while $H_0=D_m+1$, the linear equations are given by
\begin{equation}
\begin{split}
    &\left[H+\frac{1}{\lambda}\left(\Phi\lambda+\frac{\gamma}{\lambda+\gamma}\right)\right]\mu_{1,0}\\
    &+\left[H+\frac{1}{\lambda}\left(-\Phi\gamma+\frac{\gamma}{\lambda+\gamma}\right)\right]\mu_{1,1}=1,
\end{split}
\end{equation}
\begin{equation}
    \mu_{1,0}=\gamma(\mu_{1,0}+\mu_{1,1}),
\end{equation}
and
\begin{equation}
    \begin{split}
        &J^{\pi(H_0,H_1)}\\
        &=\left[\frac{1-(1-\lambda-\gamma)^{H+1}}{(\lambda+\gamma)^2}-\frac{(H+1)(1-\lambda-\gamma)^{H}}{\lambda+\gamma}\right]\\
        &\times\frac{-\lambda\mu_{1,0}+\gamma\mu_{1,1}}{\lambda+\gamma}
        +\frac{H(H+1)}{2}\frac{\lambda(\mu_{1,0}+\mu_{1,1})}{\lambda+\gamma}\\
        &+\left[H+\frac{1-(1-\lambda)^{D_m-H}}{\lambda}\right]\\
        &\frac{(1-\lambda-\gamma)^{H-1}(\lambda\mu_{1,0}-\gamma\mu_{1,1})+\gamma(\mu_{1,0}+\mu_{1,1})}{\lambda+\gamma}
    \end{split}
\end{equation}
when $H_1<D_m$;
\begin{equation}
    \begin{split}
        &J^{\pi(H_0,H_1)}\\
        &=\left[\frac{1-(1-\lambda-\gamma)^{D_m}}{(\lambda+\gamma)^2}-\frac{D_m(1-\lambda-\gamma)^{D_m-1}}{\lambda+\gamma}\right]\\
        &\times\frac{-\lambda\mu_{1,0}+\gamma\mu_{1,1}}{\lambda+\gamma}
        +\frac{D_m(D_m-1)}{2}\frac{\lambda(\mu_{1,0}+\mu_{1,1})}{\lambda+\gamma}\\
        &+D_m(\mu_{1,0}+\mu_{1,1}),
    \end{split}
\end{equation}
when $H_1=D_m$.

On the other hand, consider the situation $H_0<H_1$.
Due to the symmetry between the structure of the sub-MDPs in this situation and those in the symmetrical situation, i.e., $H_1<H_0$, the linear equations can be derived by swapping the positions of $\lambda,\gamma$ and $\mu_{1,0},\mu_{1,1}$ in the linear equations of the corresponding symmetrical contexts, and $A^{\pi(H_0,H_1)}$ is identical to that in the symmetrical contexts.
In addition, for $H_0\le H_1\le D_m$, 
\begin{equation}
    \begin{split}
        &J^{\pi(H_0,H_1)}\\
        &=\left[\frac{1-(1-\lambda-\gamma)^{H+1}}{(\lambda+\gamma)^2}-\frac{(H+1)(1-\lambda-\gamma)^{H}}{\lambda+\gamma}\right]\\
        &\times\frac{\gamma\mu_{1,1}-\lambda\mu_{1,0}}{\lambda+\gamma}
        +\frac{H(H+1)}{2}\frac{\lambda(\mu_{1,0}+\mu_{1,1})}{\lambda+\gamma}\\
        &+\left[\frac{H(1-\gamma)}{\gamma}-\frac{(\hat{H}+1)(1-\gamma)^{\hat{H}-H+1}}{\gamma}\right.\\
        &\left.+\frac{(1-\gamma)(1-(1-\gamma)^{\hat{H}-H+1})}{\gamma^2}\right]\\
        &\times \frac{(1-\lambda-\gamma)^{H-1}(\gamma\mu_{1,1}-\lambda\mu_{1,0})+\lambda(\mu_{1,0}+\mu_{1,1})}{\lambda+\gamma}.
    \end{split}
\end{equation}
For $H_0\le D_m$ while $H_1=D_m+1$, we have
\begin{equation}
    \begin{split}
        &J^{\pi(H_0,H_1)}\\
        &=\left[\frac{1-(1-\lambda-\gamma)^{H+1}}{(\lambda+\gamma)^2}-\frac{(H+1)(1-\lambda-\gamma)^{H}}{\lambda+\gamma}\right]\\
        &\times\frac{\gamma\mu_{1,1}-\lambda\mu_{1,0}}{\lambda+\gamma}
        +\frac{H(H+1)}{2}\frac{\lambda(\mu_{1,0}+\mu_{1,1})}{\lambda+\gamma}\\
        &+\left[\frac{H(1-\gamma)}{\gamma}+\frac{(1-\gamma)(1-(1-\gamma)^{D_m-H})}{\gamma^2}\right]\\
        &\times \frac{(1-\lambda-\gamma)^{H-1}(\gamma\mu_{1,1}-\lambda\mu_{1,0})+\lambda(\mu_{1,0}+\mu_{1,1})}{\lambda+\gamma},
    \end{split}
\end{equation}
when $H_0<D_m$, and
\begin{equation}
    \begin{split}
        &J^{\pi(H_0,H_1)}\\
        &=\left[\frac{1-(1-\lambda-\gamma)^{D_m}}{(\lambda+\gamma)^2}-\frac{D_m(1-\lambda-\gamma)^{D_m-1}}{\lambda+\gamma}\right]\\
        &\times\frac{\gamma\mu_{1,1}-\lambda\mu_{1,0}}{\lambda+\gamma}
        +\frac{D_m(D_m-1)}{2}\frac{\lambda(\mu_{1,0}+\mu_{1,1})}{\lambda+\gamma}\\
        &+\frac{D_m}{\gamma}\frac{(1-\lambda-\gamma)^{H-1}(\gamma\mu_{1,1}-\lambda\mu_{1,0})+\lambda(\mu_{1,0}+\mu_{1,1})}{\lambda+\gamma}.
    \end{split}
\end{equation}
For $H_0=H_1=D_m+1$, $J^{\pi(H_0;H_1)}=\frac{D_m\lambda}{\lambda+\gamma}$, and $A^{\pi(H_0;H_1)}=0$, which are identical to that when $p<1$.

\end{document}